\documentclass[12pt]{article}

\usepackage{natbib}
\usepackage{enumerate}
\usepackage{graphicx}
\graphicspath{ {images/} }
\usepackage{float}
\usepackage{caption}
\usepackage{subcaption}
\usepackage{amsthm}
\usepackage{amsfonts}
\usepackage{bm}
\usepackage{changepage}
\usepackage[margin=0.8in]{geometry}
\setlength{\parindent}{24pt}
\newtheorem{theorem}{Theorem}[section]

\newtheorem{lemma}[theorem]{Lemma}

\title{On the Complexity of 2-Player Packing Games}
\author{%
  Oliver Korten%
    \thanks{Tufts University, Medford, MA, USA}}
\begin{document}
\maketitle

\begin{center}
    \section*{Abstract}
\end{center}
\begin{adjustwidth}{10pt}{10pt}
We analyze the computational complexity of two 2-player games involving packing objects into a box. In the first game, players alternate drawing polycubes from a shared pile and placing them into an initially empty box in any available location; the first player who can't place another piece loses. In the second game, there is a fixed sequence of polycubes, and on a player's turn they drop the next piece in through the top of the box, after which it falls until it hits a previously placed piece (as in Tetris); the first player who can't place the next piece loses. We prove that in both games, deciding the outcome under perfect play is PSPACE-complete.
\end{adjustwidth}

\section{Introduction}\label{Intro}
Many NP-complete problems take the form of 1-player packing puzzles: there are a set of ``pieces'' and a board, and the player's goal is to pack all of the pieces into the board, subject to the restriction that all of them ``fit''. In polyomino packing, proved hard in \cite{dem}, the pieces are polyominoes and the board is a rectangle, and ``fitting'' in this context means being placed with no overlap. Tetris, first proved hard in \cite{tetris} and analyzed further in \cite{totaltetris}, uses the same kinds of pieces, but in this puzzle the pieces must be dropped in from the top of a box and fall until they hit other pieces, and the order of the available pieces is a fixed part of the input. Since Robertson and Munro published their analysis of Instant Insanity \cite{insanity}, it has been known that PSPACE-complete problems often take the form of 2-player generalizations of 1-player puzzles. Thus, a natural question to ask is the complexity of 2-player packing games, where players take turns placing pieces into an arrangement, and the first player who cannot extend the packing loses.

In this paper we analyze two 2-player games involving packing physical pieces into a box. In Section~\ref{polypacking}, we introduce 2-Player Polycube Packing, a 2-player generalization of the known NP-hard polyomino packing problem. In 2-Player Polycube Packing, players take turns placing polycubes into an initially empty box, and the first player who can't place another polycube loses. We prove that deciding the winner of this game under perfect play is PSPACE-complete.

In Section~\ref{tetris}, we introduce 2-Player 3D $n$-tris, a 2-player generalization of Tetris. In this game there is a set sequence of polycubes, and players take turns placing the next polycube in the sequence into an initially empty box by continuous downward motion until it rests on top of a previously placed piece. The first player who can't place a piece loses. Once again, we show that deciding a winner under optimal play in this game is PSPACE-complete. Finally, in Section~\ref{open}, we conclude with some open problems.

For both games defined below, membership in PSPACE is justified by the fact that these games are perfect information and end after a polynomial number of moves \cite{sch}. In order to establish PSPACE-hardness of our games, we will reduce from a game called Node Kayles. In this game, the input is an undirected graph $G$. Players take turns marking vertices in $G$, with the restriction that marked vertices cannot be adjacent. The first player who cannot make a move loses. Deciding a winner in this game was shown to be PSPACE-complete in \cite{sch}.

\section{2-Player Polycube Packing}\label{polypacking}
2-Player Polycube Packing is defined as follows: the input is an $N \times M \times K$ box and a set of polycubes. Two players take turns placing any remaining polycube into any available grid-aligned location in the box using translations, rotations, and (optionally) reflections. The first player who can't place any more polycubes loses.
\begin{theorem}
It is PSPACE-complete to determine the winner of a Polycube Packing game from an initially empty $N \times M \times 3$ box.
\end{theorem}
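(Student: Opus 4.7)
The plan is to reduce from Node Kayles, which is PSPACE-complete on arbitrary graphs. Given an instance $G=(V,E)$ of Node Kayles, I would build an $N \times M \times 3$ box together with a family of polycubes $\{P_v : v \in V\}$ satisfying: (a) each $P_v$ has a unique valid placement in the box—its ``home''—even after allowing all rotations and reflections, and (b) the home placements of $P_u$ and $P_v$ share a unit cube if and only if $\{u,v\} \in E$. Given (a) and (b), every reachable game position corresponds to an independent set of $G$ (namely the vertices whose polycubes have been placed), and the moves available at each step correspond exactly to the vertices that can still be added to that independent set. Consequently the two games end with a loss for the same player, establishing the reduction.

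To realize (b), I would include in each $P_v$ one ``edge cube'' in the middle layer $z=2$ for every edge of $G$ incident to $v$. For each edge $\{u,v\} \in E$ I would reserve a distinct $(x,y)$-coordinate so that both $P_u$'s and $P_v$'s home placements contain the same middle-layer unit cube, while edge cubes for different edges live at different positions. The remaining cubes of each $P_v$, in layers $z=1$ and $z=3$, would be placed in pairwise-disjoint $v$-dependent regions of the box, so that overlaps among home placements occur only through shared edge cubes in the middle layer, and thus correspond exactly to $E$.

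The substantive difficulty is (a), enforcing uniqueness of placement. I would attach to each $P_v$ an asymmetric ``identifier'' sub-polycube in layers $z=1$ and $z=3$ whose extent pins down the translation (for instance, by including a rigid bar whose length matches a full box dimension) and whose chiral features, distributed between the top and bottom layers, break every rotational and reflective symmetry. The main obstacle will then be verifying that no rotation, reflection, or non-home translation of $P_v$ fits entirely inside the $N \times M \times 3$ box. I expect to handle this by case analysis over the finite set of rigid motions of a polycube compatible with a rectangular box, exploiting the tight height $3$ and the chirality of the identifier to show that every non-home motion pushes at least one cube of $P_v$ outside the box boundary. Since placement uniqueness is a property of the empty box and legal moves only remove future options, once (a) holds in the empty configuration it persists throughout the game, and together with (b) the simulation of Node Kayles is complete.
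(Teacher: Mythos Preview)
Your plan has a genuine gap at step (a). You propose to enforce uniqueness of each $P_v$'s placement in the empty box by making $P_v$ chiral, arguing that ``every non-home motion pushes at least one cube of $P_v$ outside the box boundary.'' But this cannot succeed: the empty $N \times M \times 3$ box has its own nontrivial symmetry group (at minimum the three coordinate-plane reflections and the three $180^\circ$ rotations when $N$, $M$, $3$ are pairwise distinct), and every such symmetry $g$ maps the box onto itself. Hence if the home placement of $P_v$ lies inside the box, so does $g(\text{home})$, and no cube is pushed outside. For $g(\text{home})$ to coincide with the home placement you would need $P_v$ to be $g$-invariant, which is exactly the opposite of making it chiral. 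So an asymmetric piece in an empty rectangular box always admits several valid placements, and once that happens the Node Kayles simulation breaks: a player may drop $P_v$ at a rotated or reflected position whose edge cubes sit at different $(x,y)$-coordinates, creating overlaps with other pieces that no longer correspond to the edges of $G$.

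This is precisely why the paper does not attempt to pin down the vertex pieces inside an empty box. Instead it builds a single large ``mold'' piece that spans all three dimensions (so its placement is unique up to a global box symmetry, which is harmless) and leaves behind an asymmetric cavity; the vertex pieces are then forced into unique slots by features of that cavity---keys, support beams, edge crossovers---rather than by the bare box walls. The remaining work, getting from an initially empty board to the ``mold already placed'' state, is handled by adding auxiliary pieces $q_i^{\pm}$ and swapping which player moves first: if anything other than the mold is played on move one, the opponent responds with the matching $q$-piece and the game ends immediately, while once the mold is down every $q$-piece becomes unplayable. Your proposal is missing both of these ingredients.
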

We will prove PSPACE-hardness of Polycube Packing by reducing from the problem of deciding a winner in Node Kayles. The proof holds regardless of if we allow rotations only, or rotations and reflections. The proof also uses only simply-connected polycubes.

\subsection{Overview of Reduction}
Given an instance of Node Kayles in the form of a graph $G$, we will construct a set of polycubes and a complex initial board state such that each remaining polycube can only be placed in exactly one spot on the board. Further, we will construct this initial state such that if one were to superimpose all of these remaining polycubes onto the board in their unique spots, the intersection graph of these polycubes would be isomorphic to $G$. In this way, each playable polycube plays the role of a vertex in Node Kayles, and playing a polycube (by putting it in its unique spot) removes from future play exactly those polycubes which represent the neighbors of its respective vertex in $G$. Next, it is shown that this complex initial state can actually be arrived at after the placement of a single polycube, which we call the $G$-mold. Finally, we prove that by adding the $G$-mold and a few other pieces into the set of playable polycubes, we can get this reduction to work even from an initially empty board state.

\textit{\textbf{Note:}} In every figure, upwards is the positive $y$-direction, right is the positive $x$-direction, and out of the page is the positive $z$-direction, unless otherwise specified by a 3-dimensional axis icon. In this icon, the arrows pointing straight up and straight right indicate dimensions oriented upwards and rightwards respectively, and the arrow bisecting these two orthogonal arrows indicates the dimension oriented into the page. An example appears in Figure~\ref{fig:crossover}.

\subsection{Hardness from a Complex Initial Board State}

\subsubsection{Wiring Diagram}
Given a graph $G = (V, E)$, where $V = \{v_1, ..., v_n\}$ and $E = \{e_1, ..., e_m\}$, first we will assign to each $v_i$ a horizontal line segment with $x$-coordinate $4i$, spanning the $y$-interval $[0, 2m]$. Now, for each edge $e_i$ which connects $v_j$ to $v_k$, draw a horizontal line segment with endpoints $(4j, 2i)$ and $(4k, 2i)$. We will refer to this 2-dimensional layout of $G$ as its wiring diagram. Figure~\ref{fig:wiring diagram} shows the wiring diagram of $H = (V', E')$ where\\
\centerline{$V' = \{v_1, v_2, v_3, v_4, v_5, v_6, v_7\}$} 
\centerline{$E' = \{v_1v_4, v_6v_7, v_2v_4, v_4v_6, v_3v_5, v_1v_3, v_1v_4, v_5v_7\}$}
\begin{figure}[H]
	\centering
	\includegraphics[scale=0.5]{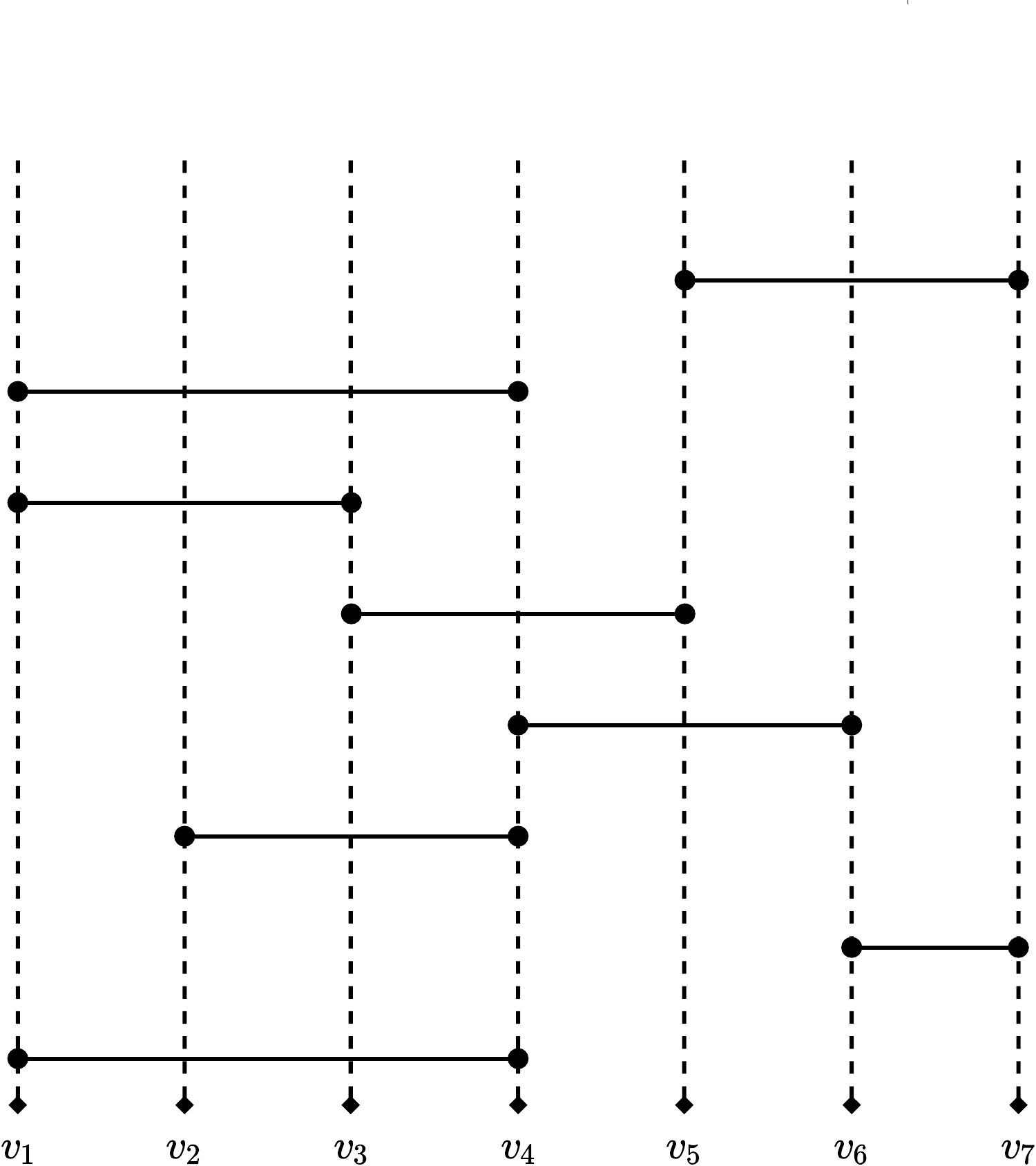}
	\caption{Wiring diagram of H}
	\label{fig:wiring diagram}
\end{figure}
\subsubsection{Cavity}
We will now construct a polycubal structure which mimics the wiring diagram of $G$. The first step is to build a vertical $1 \times 1 \times k$ column for each vertex, where $k$ is the height of the vertical line corresponding to that vertex in the wiring diagram (dotted lines in Figure~\ref{fig:wiring diagram}). Now place these vertex columns with their left boundaries along those corresponding vertical lines in the plane. 

Now, for each edge $e_j$, create a horizontal $1 \times 1 \times k$ column where $k$ is the length of the horizontal segment corresponding to $e_j$ in the wiring diagram, and place this at the same $y$ coordinate as the corresponding segment in the wiring diagram. Say this edge connects $v_i$ to $v_h$, $i < h$. So we will glue the ends of it to the vertical columns for $v_i$ and $v_h$ by laying the two vertical columns and the horizontal column out in the plane as described, and then taking the union of these three polycubes. We then make the following modification to this construction: for every $v_k$ where $i < k < h$, bend $e_j$ to cross over $v_k$ via the construction shown in Figure~\ref{fig:crossover}.

\begin{figure}[H]
	\center
	\includegraphics[scale=0.4]{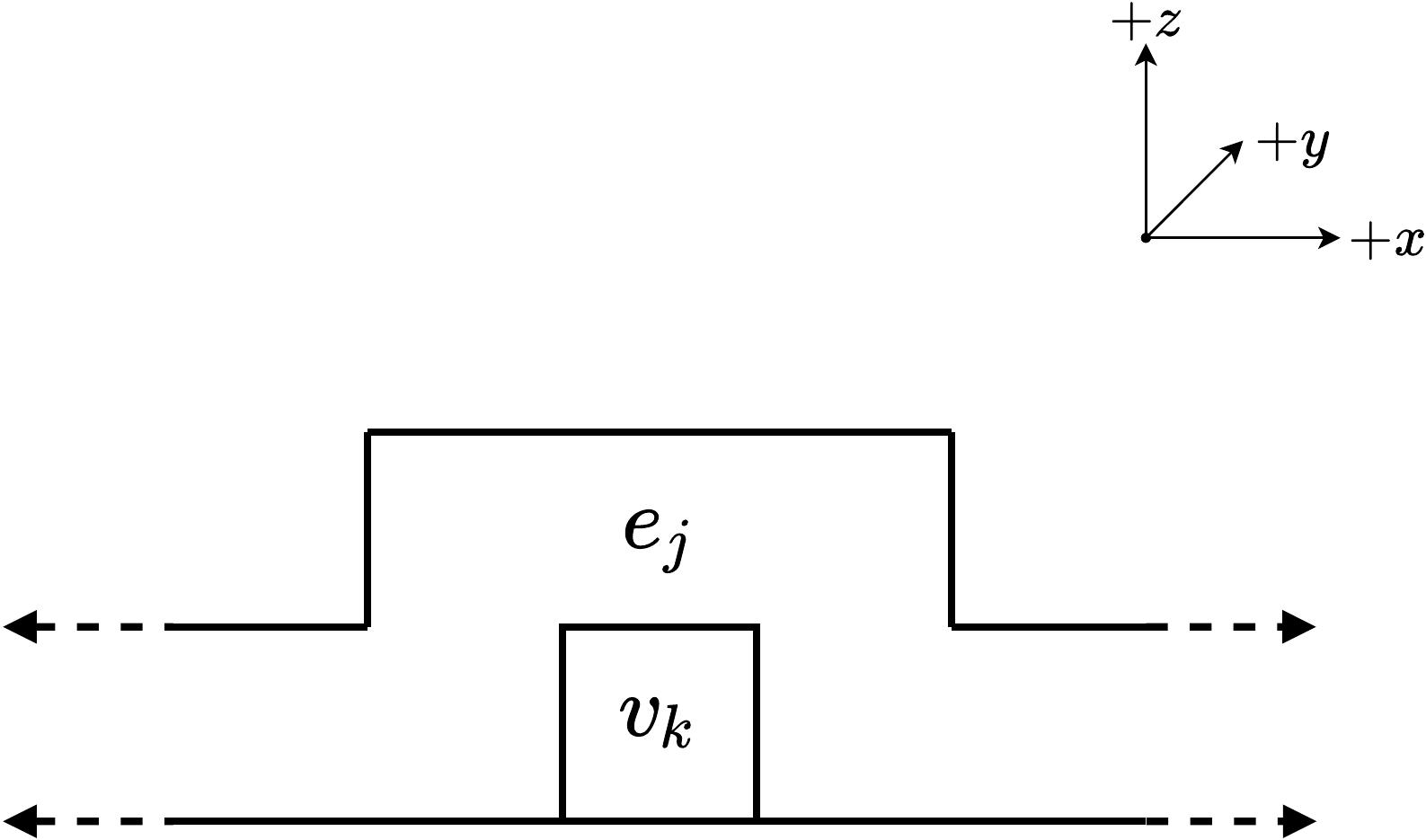}
	\caption{Edge crossing over vertex}
	\label{fig:crossover}
\end{figure}
This part of the contruction will be refered to as the ``wiring region'' of the cavity. The wiring region for the example graph $H$ is shown in Figure~\ref{fig:wiring region}.
\begin{figure}[H]
	\center
	\includegraphics[scale=0.5]{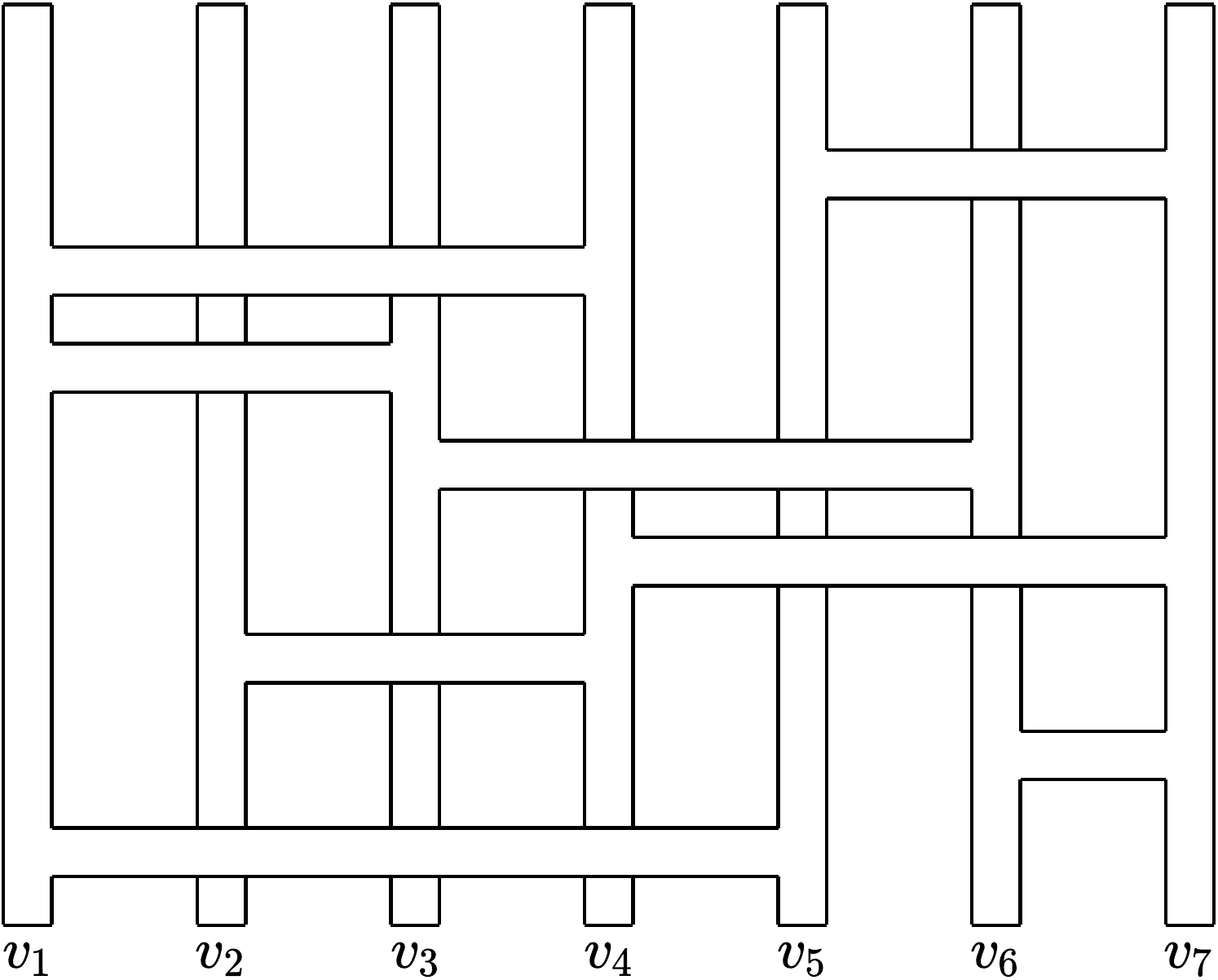}
	\caption{Wiring Region for $H$}
	\label{fig:wiring region}
\end{figure}
Next, we will add ``support beams" to each of the vertex columns as follows. First, extend the tops of the vertex columns by $2n$ units. Now, for vertex $v_i$, add two horizontal beams $2i$ units above the bottom of the extension, one reaching from the vertex column of $v_i$ to one unit past the left limit of the wiring region, and the other reaching to one unit past the right limit. These horizontal pieces cross over other vertex columns in the same way edges cross over vertex columns (as shown in Figure~\ref{fig:crossover}). The support beams will be used to ensure that playable pieces must be placed with the correct $x$-shift.

The final step will be to create ``keys'' at the bottom of each vertex column. To do this, extend each vertex column at the bottom by $2n$ units, and attach a single cube hanging off of the left-hand side at the top of this extension. Now, we have completed construction of what we will refer to as the $G$-cavity. Figure~\ref{fig:full cavity} shows the complete cavity construction for the example graph $H$ including the wiring region, keys, and support beams.
\begin{figure}[H]
	\center
	\includegraphics[scale=0.25]{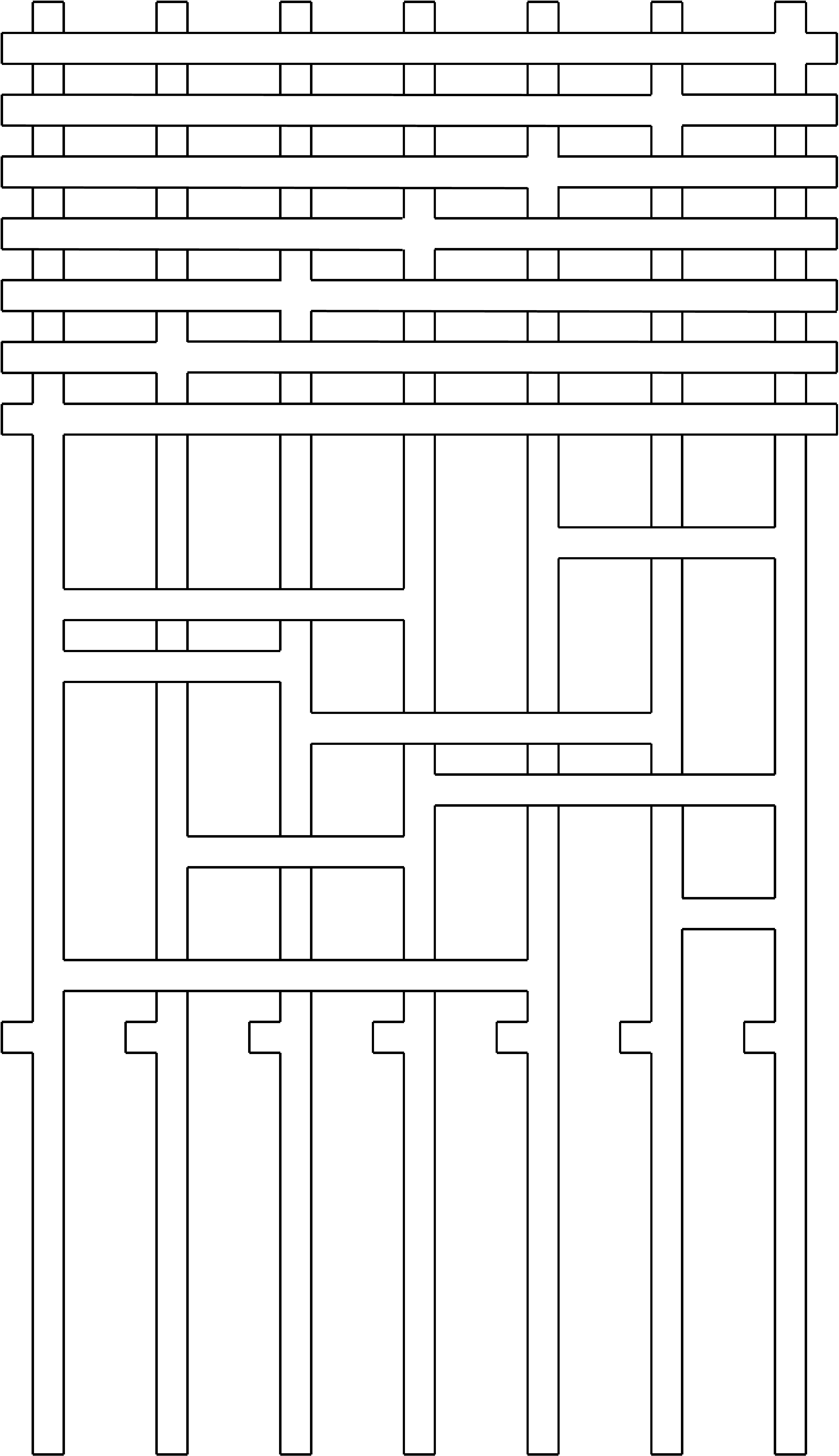}
	\caption{Full Cavity Construction for $H$}
	\label{fig:full cavity}
\end{figure}

\subsubsection{Mold}
The purpose of the $G$-cavity is to serve as the effective board, constraining which of the remaining pieces can be played and where. In order to achieve this, we will create a piece such that when it is placed in an $N \times M \times 3$ bounding box, it leaves the cavity as the only remaining open space. First, create an $N \times M \times 3$ bounding box which is exactly large enough the fit the $G$-cavity; take $M$ to be the height of the $G$-cavity in the $y$ direction and $N$ to be its width in the $x$ direction. By construction, the height of the $G$-cavity in the $z$ direction is 2; it's flat everywhere except at the crossovers which bulge upwards one unit in the z direction. We will now place the $G$-cavity in the bounding box such that the crossover ``bulges'' face upwards in the $z$ direction, and the bottom of the cavity is flush with the bottom of the box. Now, take the $G$-mold to consist of every voxel in the bounding box which is not occupied by the cavity. Figure 6 shows this construction one $y$-valued layer at a time. It is straightforward to see that since each $y$-valued layer of this piece is simply connected and each layer connects to the next by proper face-to-face contact, the entire construction is simply connected.
\begin{figure}[H]
	\centering
	\begin{subfigure}{.5\textwidth}
		\centering
		\includegraphics[width=.8\linewidth]{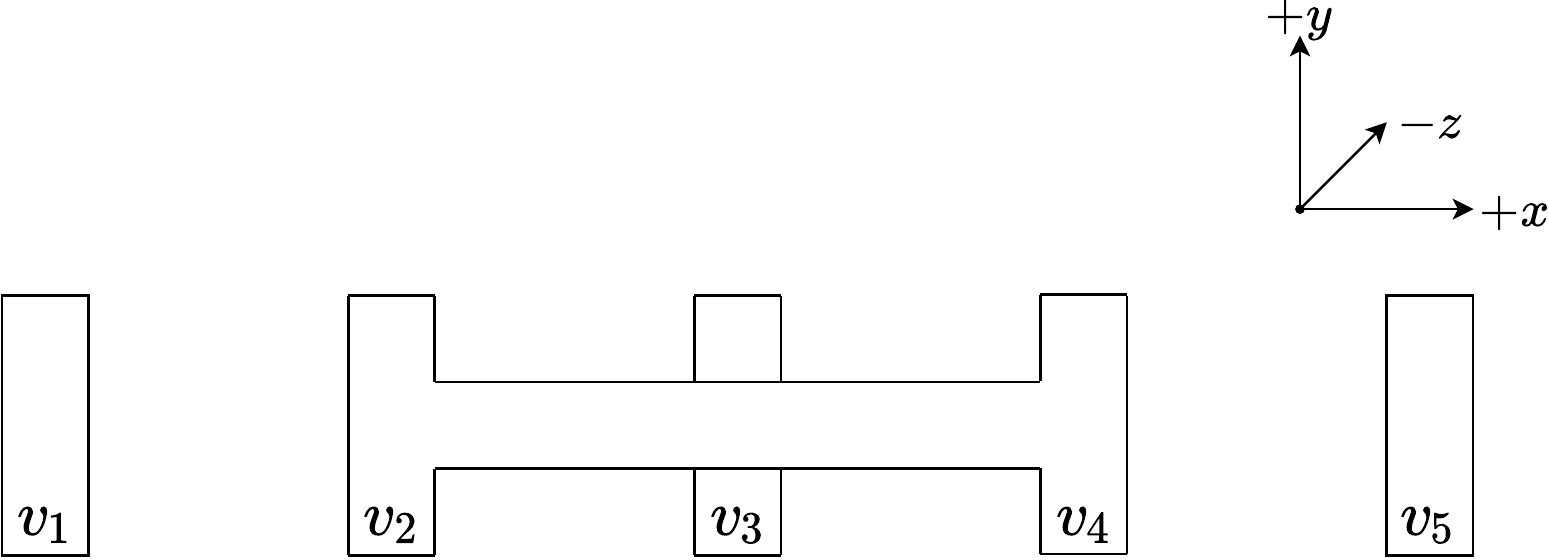}
		\caption{Top-down View of Cavity}
		\label{fig:mold 1}
	\end{subfigure}%
	\begin{subfigure}{.5\textwidth}
		\centering
		\includegraphics[width=.8\linewidth]{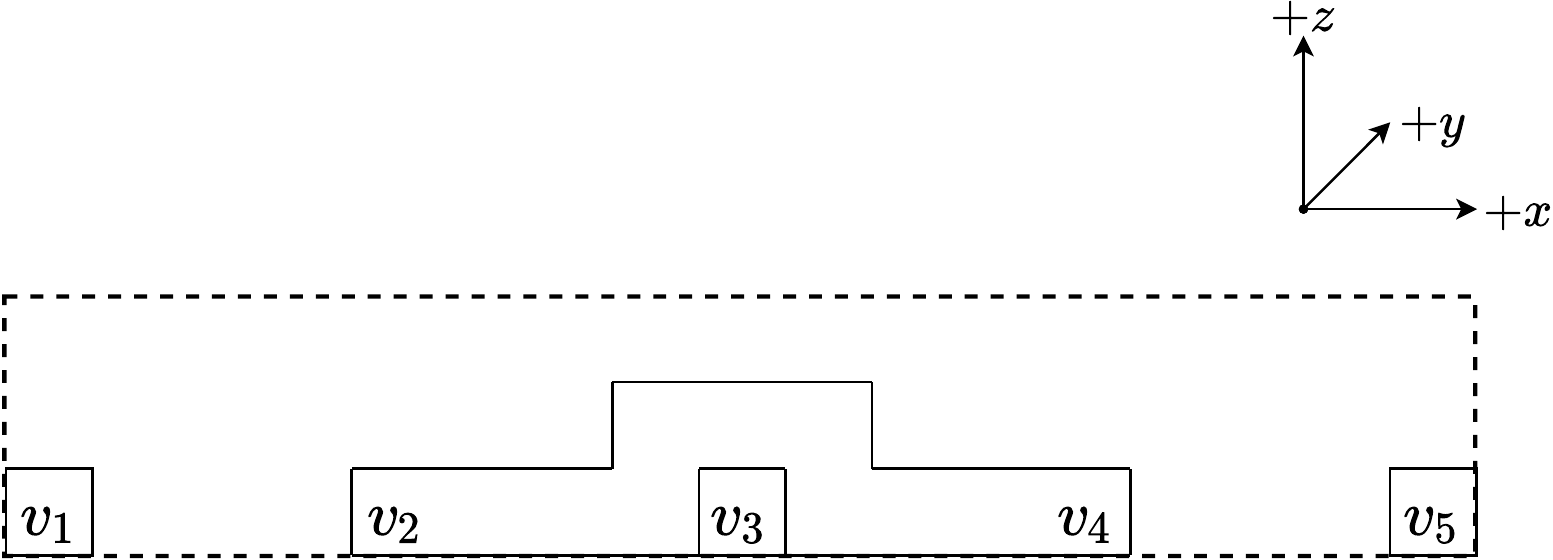}
		\caption{Cross-sectional View of Cavity}
		\label{fig:mold 2}
	\end{subfigure}
	\begin{subfigure}{.5\textwidth}
		\centering
		\includegraphics[width=.8\linewidth]{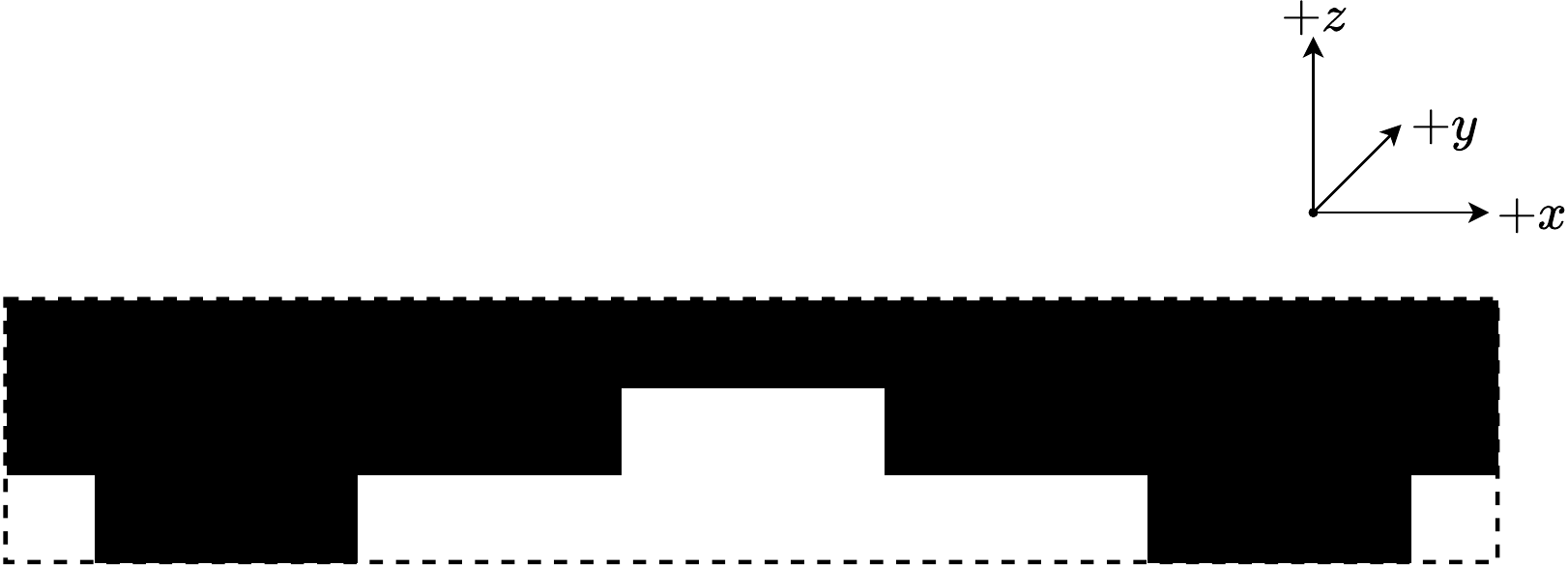}
		\caption{Mold Construction}
		\label{fig:mold 3}
	\end{subfigure}%
\caption{Constructing the mold at a fixed $y$-coordinate. Dotted line shows box boundary, and shaded region shows the mold.}
\label{fig:mold}
\end{figure}
\subsubsection{Vertex Pieces}
The final step in the reduction is to create pieces corresponding to each vertex. To do this, for each $v_i$, take the part of the $G$-cavity construction corresponding to $v_i$ and all of its incident edges, along with its support beams and its key. An example construction for the vertex piece for $v_4$ in $H$ is shown in Figure~\ref{fig:piece}.
\begin{figure}[H]
	\center
	\includegraphics[scale=0.1]{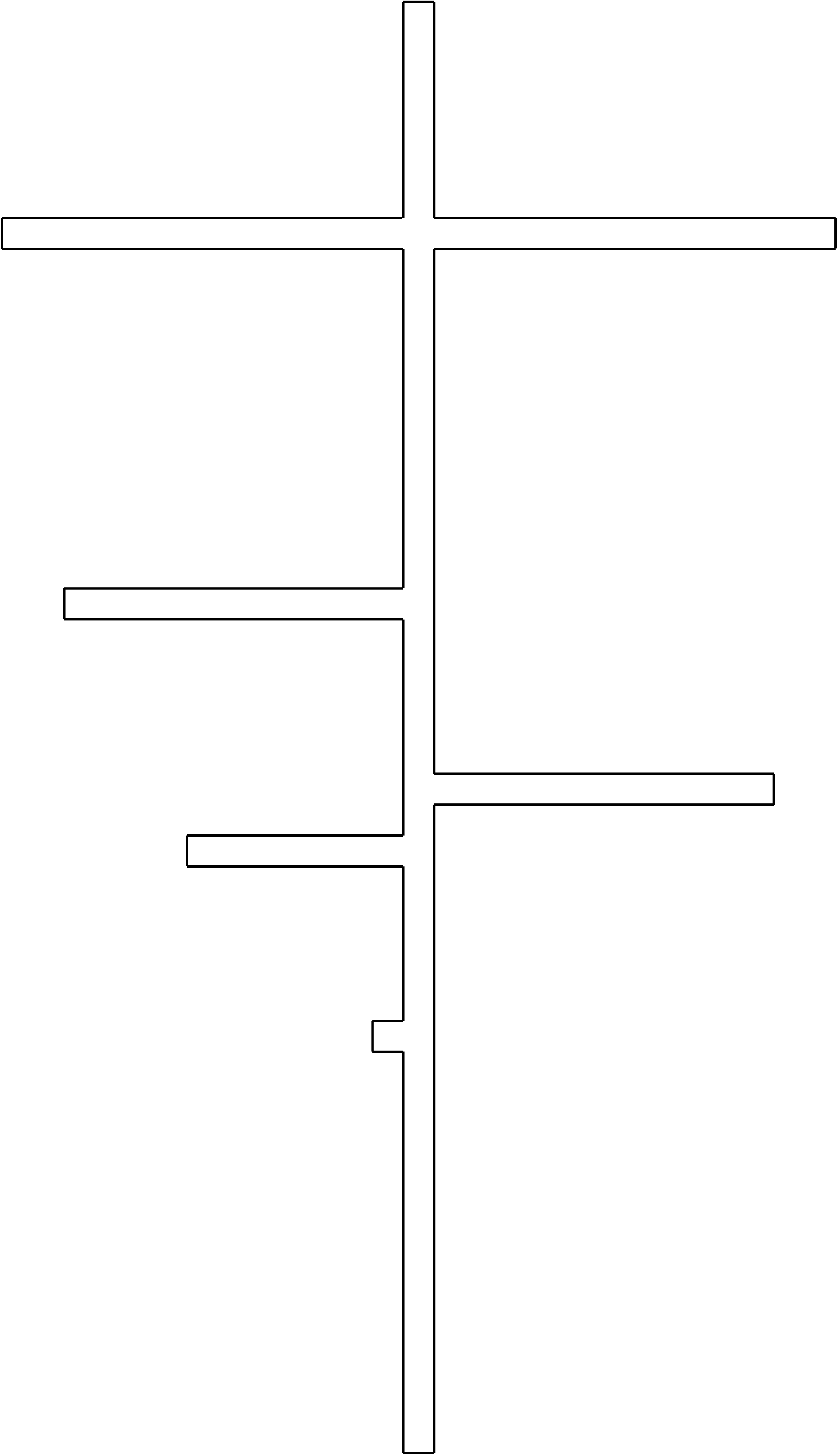}
	\caption{Vertex Piece for $v_4$ in $H$. Note that this piece still contains the crossovers on its edges/support beams despite appearing flat in this figure.}
	\label{fig:piece}
\end{figure}

\begin{lemma}\label{placement}
Once the $G$-mold has been placed in the $N \times M \times 3$ bounding box, each vertex piece fits only into its corresponding location in the cavity.
\end{lemma}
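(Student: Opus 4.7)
The plan is to show that after the mold is placed, the only free space in the box is the $G$-cavity, and that the vertex piece $P_i$ has exactly one isometry and translation under which it fits inside that cavity. The ``home'' location works by construction, since $P_i$ is a sub-polycube of the cavity by definition; the entire task is to rule out every other placement.

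I would first rule out orientations, then pin down the translation. The vertex column of $P_i$ is a $1\times 1\times(2m+4n)$ bar that is longer than any horizontal extent of the cavity, so any rotation that tilts it off the $y$-axis forces the column through mold material. The piece also has two chiral features: the \emph{key}, a single cube glued only to the left face of the column, and the crossover \emph{bulges}, which protrude only upward in $z$. In the cavity, key pockets exist only on the left face of column slots and bulge pockets only at $z=2$, so reflections and $180^\circ$ flips about the $y$-axis are ruled out. This leaves translation in the three coordinates, and the $z$-shift is immediate because the cavity is two layers deep with crossover pockets only on top. For the $(x,y)$-shift, the distinct support-beam heights of each vertex do the work: $v_j$'s support beams in the cavity are at height $2m+2n+2j$, and these are the only cavity features at that height. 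If $P_i$ is shifted so its column aligns with $v_j$ for some $j\ne i$, the beams at height $2m+2n+2i$ on $P_i$ collide with column material of the cavity at that height; if I additionally shift $P_i$ vertically by $2(j-i)$ to realign the beams, then $P_i$'s key---which sits at fixed height $2n$ above the piece's floor---lands at height $2n+2(i-j)\ne 2n$, where the cavity has no key pocket. Hence $j=i$ and no $y$-shift is possible, giving exactly the home location.

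The main obstacle I anticipate is the orientation case analysis: with up to $48$ isometries of the bounding cube, several may superficially appear to permit a placement. I would handle these uniformly by selecting, for each nontrivial isometry, one distinguished cube on $P_i$ (e.g., the key cube, a topmost crossover cube, or an endpoint of the vertex column) whose image lies in mold material, leveraging the piece's asymmetry in $x$ (the key side), in $z$ (the bulge direction), and in length (the elongated $y$-axis). Once orientation is fixed, the translation argument above is essentially forced, and the remaining details---that the edges and vertex column of $P_i$ register exactly with the corresponding channels in the cavity---follow automatically from the fact that $P_i$ was carved from the cavity's construction at vertex $v_i$.
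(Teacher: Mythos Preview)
Your overall strategy---fix the orientation first, then the translation---matches the paper's, and your use of the long column, the key cube, and the crossover bulges is in the right spirit. There are, however, two genuine gaps.

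\textbf{Translation.} You overlook the feature of the construction that makes this step immediate: the two support beams of every vertex piece together span the \emph{entire} $x$-width of the box (each reaches from the vertex column out past the boundary of the wiring region), just as the vertex column spans the entire $y$-height. Once the orientation is fixed, the piece already touches all four $x$- and $y$-walls of the box, so there is no translational freedom whatsoever; this is exactly how the paper argues. Your alternative---shift $P_i$'s column to $v_j$'s slot and look for a beam-height collision, then compensate with a $y$-shift and look for a key-height collision---is both unnecessary and flawed. The $y$-shift you invoke cannot occur (the column already fills the full height), and the claimed collision ``with column material'' at height $2m+2n+2i$ is not justified: the cavity contains a full-width horizontal channel at precisely that height, namely $v_i$'s own support-beam slot, so a horizontally shifted beam does not obviously hit mold there. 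You also never treat $x$-shifts that land the column strictly between two vertex slots.

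\textbf{Orientation.} Your key and bulge arguments do not cover the isometry that swaps $+y$ with $-y$ while preserving both the $x$- and $z$-directions (available once reflections are allowed). Under that map the key is still on the left and the bulges still point up, so neither of your chiral obstructions applies. The paper handles this ``beams at the bottom, key at the top'' case by a different mechanism: every support beam carries at least one crossover into the $z=2$ layer, but the key region of the mold has its $z=2$ layer completely filled, so a beam placed there collides with the mold. Your case analysis needs this (or an equivalent) ingredient.
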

\begin{proof}
Say that the $G$-mold has been placed in the bounding box such that the cavity has the orientation shown in Figure~\ref{fig:full cavity}. This is without loss of generality since the $G$-mold spans every dimension of the bounding box and thus has only one valid placement up to rotation and reflection. First, note that by construction, the $G$-mold occupies every cube in the $z = 3$ plane. Further, the $z = 2$ plane is almost entirely full except for a set of disjoint $1 \times 1 \times 3$ tubes which are the result of edge-vertex crossovers.
Since every vertex piece has a vertex column which is the length of the entire box in the $y$-dimension, and the $y$-dimension is the longest dimension of the box by construction, this column must be placed spanning the $y$-dimension of the box. Moreover, since the $z = 3$ and $z = 2$ planes are almost entirely filled by the $G$-mold, the vertex piece must be placed with its column spanning the $y$-dimension of the $z=1$ plane.

Next, we show that every vertex piece must be placed with its key section at the $-y$ region of the board and its support beam section at the $+y$ region. Since the support beam region is the same height as the key region, if a vertex piece were placed upside-down, its support beams would be in the key section of the $G$-mold. But since in the key section of the $G$-mold, the entire $z=2$ plane is filled (no crossovers occur), and since at least one of the support beams on every vertex piece crosses over some other vertex column, the cross-over section of the piece's support beam would intersect the filled $z=2$ plane in the key region of the $G$-mold. So every vertex piece must be placed with its vertex column in the $z=1$ plane such that its key faces the bottom ($-y$) end, and its support beams faces the top end. Since every vertex piece has at least one support beam longer than 3 units, every piece must be placed with its support beams spanning left and right across the $x$-dimension. Further, the edge crossovers must be facing upwards since if they faced down they would intersect the bottom of the bounding box. Thus, if we only allow rotations, the proof is complete; the vertex column must sit in the $z=1$ plane spanning the $y$-dimension, the key side must face $-y$ side of the box, and since the edge crossovers must face upwards in the $z$ direction, the fact that the support beams span the $x$-dimension fixes the position of the piece along the $x$-axis.

If reflections are allowed, the only other option is for the piece to be reflected across the $x=0$ plane from its ``correct'' position (the position we just proved it is forced into if reflections aren't used). However, if a vertex $v_i$ is placed in such a way, then it occupies the vertex column for vertex $v_{n-i+1}$ in the $G$-cavity. Since it has been reflected, its key peg now protrudes from the right of its vertex column. But in the key region, the column immediately to right of every vertex column is completely filled in the $z=1$ plane by the $G$-mold, so the key peg would intersect the $G$-mold. So such a placement is not possible.
\end{proof}

\subsubsection{Play}
\begin{lemma}\label{hard with mold}
    Once the $G$-mold has been placed, the next player to move has a winning strategy in this instance of Polycube Packing if and only if the first player has a winning strategy in Node Kayles on $G$.
\end{lemma}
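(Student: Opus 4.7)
The plan is to exhibit a bijection between positions of this instance of Polycube Packing (after the $G$-mold has been placed) and positions of Node Kayles on $G$ that preserves the legality of moves, thereby transferring the winner question directly between the two games. Since the only polycubes remaining in the pile after the mold placement are the vertex pieces $P_1, \ldots, P_n$, and since each $P_i$ has a unique legal placement (its ``slot'') by Lemma~\ref{placement}, every subsequent move amounts to choosing an unplayed $v_i$ whose slot does not collide with the slot of any previously played vertex piece. Game positions will thus be parameterized by subsets $S \subseteq V$ of already-played vertices.

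This reduces the lemma to the following key claim: the slots of two distinct vertex pieces $P_j$ and $P_k$ share a cube if and only if $v_j v_k \in E$. Granted this claim, the set of legal moves from a position $S$ in Polycube Packing is exactly $V \setminus N[S]$, which matches the legal Node Kayles moves from $S$; by a standard backward induction on the game tree, the same player wins both games from corresponding positions, and in particular from the initial empty state, which yields the lemma.

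The main obstacle will be proving the key claim. The ``if'' direction is immediate: if $v_j v_k \in E$, then both $P_j$ and $P_k$ contain the entire edge column for $v_j v_k$, so their slots overlap on those cubes. For the ``only if'' direction, I would stratify the cavity by $y$-coordinate into the key region (bottom), the wiring region (middle), and the support-beam region (top), and verify absence of collisions in each separately. In the key region, the occupied cubes of $P_i$ lie in its vertex column at $x = 4i$ together with its key peg immediately to the left, so slots for different vertices sit in disjoint $x$-columns. In the support-beam region, each $P_i$'s beams occupy the unique $y$-row $2i$ above the top of the wiring region, with crossover bulges residing in the $z = 2$ plane above other vertex columns rather than inside them. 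In the wiring region, each edge column of $P_i$ sits at its own $y$-row, and crossovers over intermediate vertex columns again bulge upward into $z = 2$ without sharing cubes with those columns, so the only way two vertex pieces can share a cube there is through a jointly incident edge. Once this case analysis is complete, the isomorphism with Node Kayles follows and the lemma is immediate.
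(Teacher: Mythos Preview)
Your proposal is correct and follows essentially the same approach as the paper: both invoke Lemma~\ref{placement} for unique placement and then argue that two vertex-piece slots overlap if and only if the corresponding vertices are adjacent, so that the remaining play is move-for-move equivalent to Node Kayles on $G$. Your region-by-region verification of the ``only if'' direction is more detailed than the paper's, which dispatches this point in a single sentence (``by construction, the forced positions of the vertex pieces are otherwise completely disjoint'').
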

\begin{proof}
Imagine that the $G$-mold has already been placed in the $N \times M \times 3$ box, and the remaining pieces are exactly the vertex pieces for each $v_i$. Due to Lemma~\ref{placement}, each $v_i$ can be played only in its corresponding location in the $G$-mold. Once some $v_i$ piece is played in its unique spot, the pieces which it prevents from being played in the future are exactly the pieces corresponding to the vertices $v_i$ is adjacent to in $G$. This is because the $v_i$ piece is constructed to contain the edge pieces for all of its edges, so if $v_i$ has some neighbor $v_j$, the vertex pieces for $v_i$ and for $v_j$ will both occupy the cavity for the edge $v_iv_j$, so playing either one prevents the other from being played. By construction, the forced positions of the vertex pieces are otherwise completely disjoint, so any two pieces corresponding to non-adjacent vertices do not prevent one another from being played. Thus, this game plays exactly like Node Kayles, so winning positions in one game correspond precisely to winning positions in the other. 
\end{proof}

\begin{figure}[H]
	\center
	\includegraphics[scale=1]{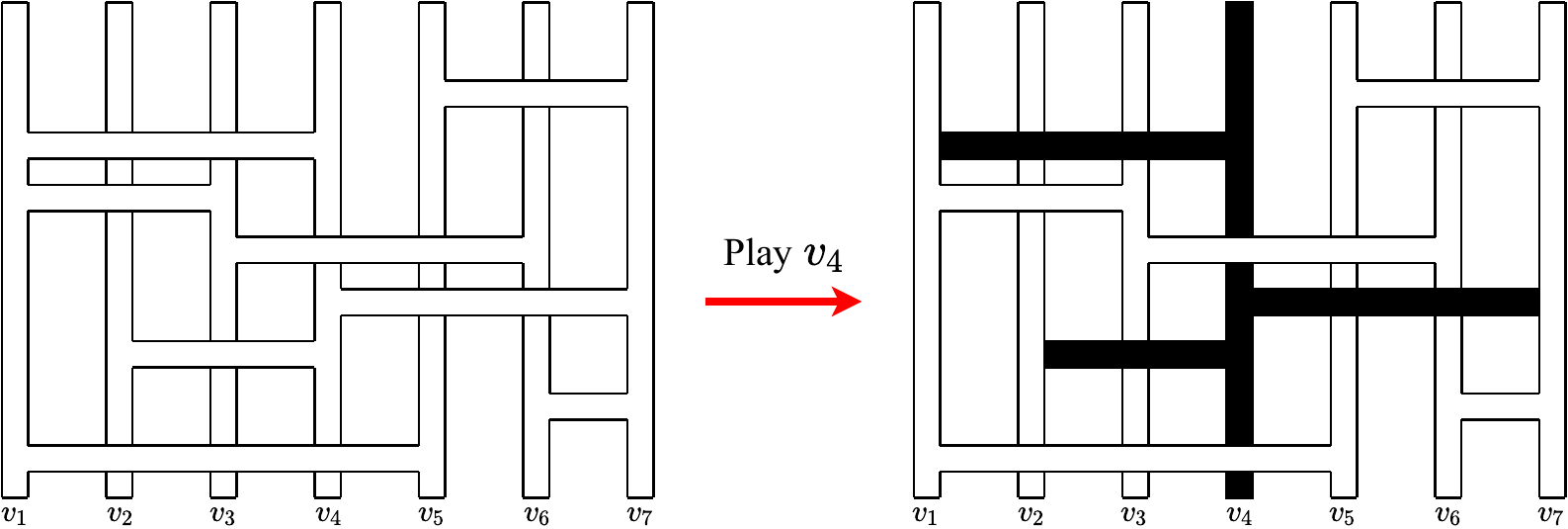}
	\caption{Playing a Vertex Piece}
	\label{fig:play}
\end{figure}

\subsection{Hardness From an Initially Empty Board} The above proof shows that it is PSPACE-hard to determine if a player can win once a single piece, the $G$-mold, has been placed. We will now show that it is also PSPACE-hard to determine a winner from an initially empty $N \times M \times 3$ box. Given an instance of Kayles just as before, we will use the same reduction to create an instance of Polycube Packing, except now we reverse the roles of who plays first, and add some pieces. 

\subsubsection{New Pieces} First, we will add the $G$-mold into the set of playable pieces. Next, for each vertex piece $p_i$, we will add two new pieces, $q_i^-$ and $q_i^+$. The purpose is to create these pieces such that if $p_i$ is played first, either $q_i^-$ or $q_i^+$ can be played second and the game immediately ends, and conversely if $q_i^-$ or $q_i^+$ is played first, $p_i$ can be played second and the game immediately ends. If we can construct $q_i^-$ and $q_i^+$ such that this is accomplished, than the first player will be forced to play the $G$-mold piece in order to avoid immediate defeat, and if we can ensure that playing the $G$-mold makes all $q_i^-$ and $q_i^+$ unplayable, then the remainder of the game will play out exactly as in the original reduction.

Because of the use of support beams which cause each vertex piece to span the $x$ and $y$ dimensions of the bounding box, we know that each vertex piece $p_i$ can only be played in the empty bounding box in one of two grid positions, modulo a reflection/rotation: either it is placed with its vertex column in the $z=1$ plane and its edge crossovers rising into the $z=2$ plane, or it is placed with its vertex column in the $z=2$ plane and its edge crossovers in the $z=3$ plane We will call the first such position ``normal'' position, and the second ``raised'' position. We can ignore reflections and $180^{\circ}$ rotations since if we can construct the $q_i^-$ and $q_i^+$ to fill the rest of the box once $p_i$ is played in a given orientation, then if $p_i$ is reflected or rotated by $180^{\circ}$, $q_i^-$ and $q_i^+$ can be reflected/rotated in the same way.

Imagine that $p_i$ is placed in normal position. We will construct $q_i^-$ to (almost) fill the rest of the box. The construction is very similar to that of the $G$-mold. First, fill the $z=3$ plane completely, then fill every voxel in the $z=2$ plane not occupied by an edge crossover, and finally, fill every voxel in the $z=1$ layer which is unoccupied and is not below an edge crossover. By not filling the gaps under the edge crossovers we ensure that this construction is simply connected.

Next, imagine that $p_i$ is placed in raised position. We construct $q_i^+$ as follows. First, fill the entire $z=1$ plane. Now, fill every unoccupied voxel in the $z=2$ plane. Finally, fill every voxel in the $z=3$ plane which is not occupied and which is not directly above an edge, support beam, or key stub. Again, by not filling areas directly above edges/beams/keys we ensure simply connected pieces. These new piece constructions are shown in Figure~\ref{fig:q pieces}.

\begin{figure}[H]
	\centering
	\begin{subfigure}{.5\textwidth}
		\centering
		\includegraphics[width=.8\linewidth]{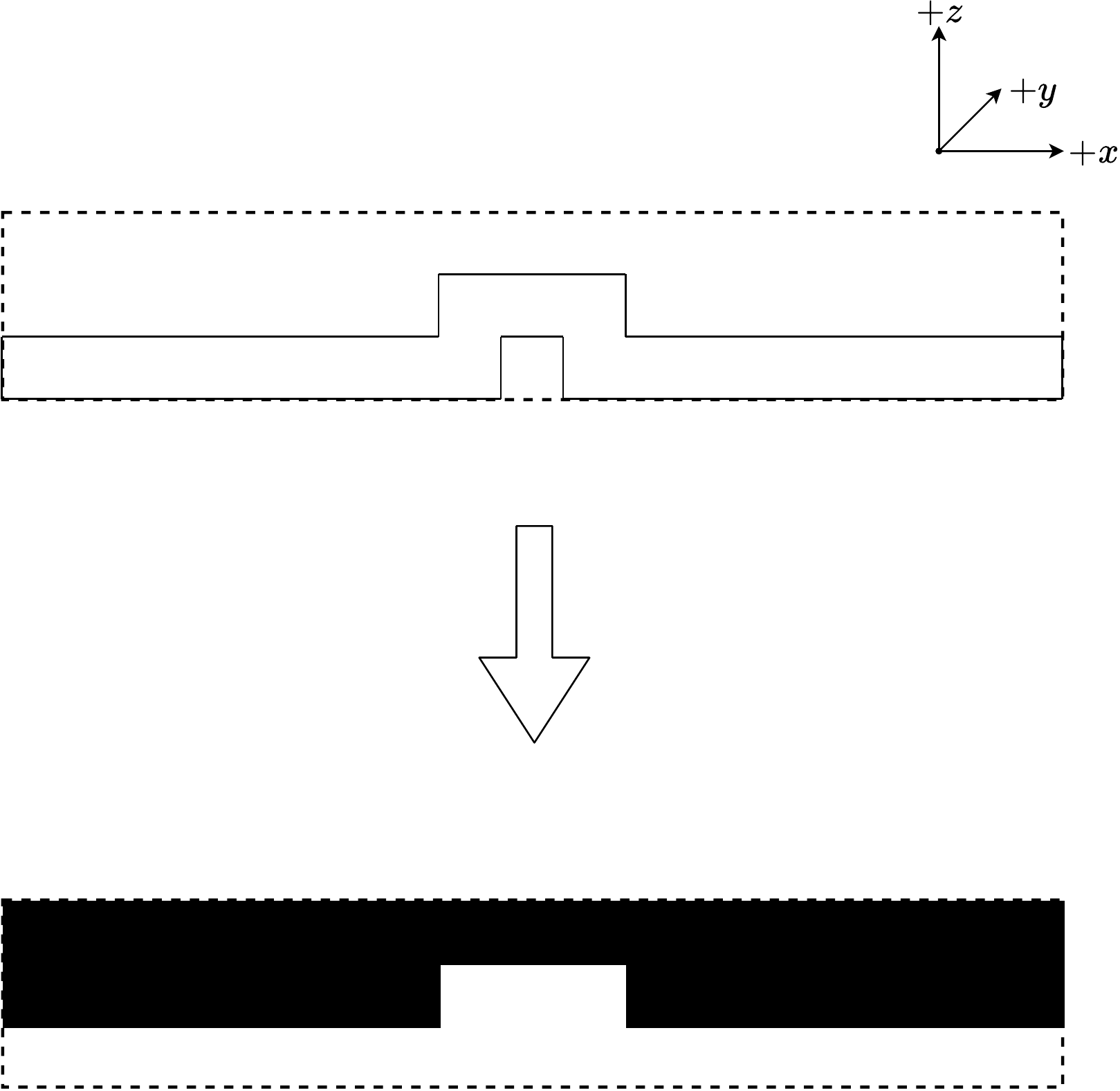}
		\caption{$q_i^-$ Construction}
		\label{fig:normal}
	\end{subfigure}%
	\begin{subfigure}{.5\textwidth}
		\centering
		\includegraphics[width=.8\linewidth]{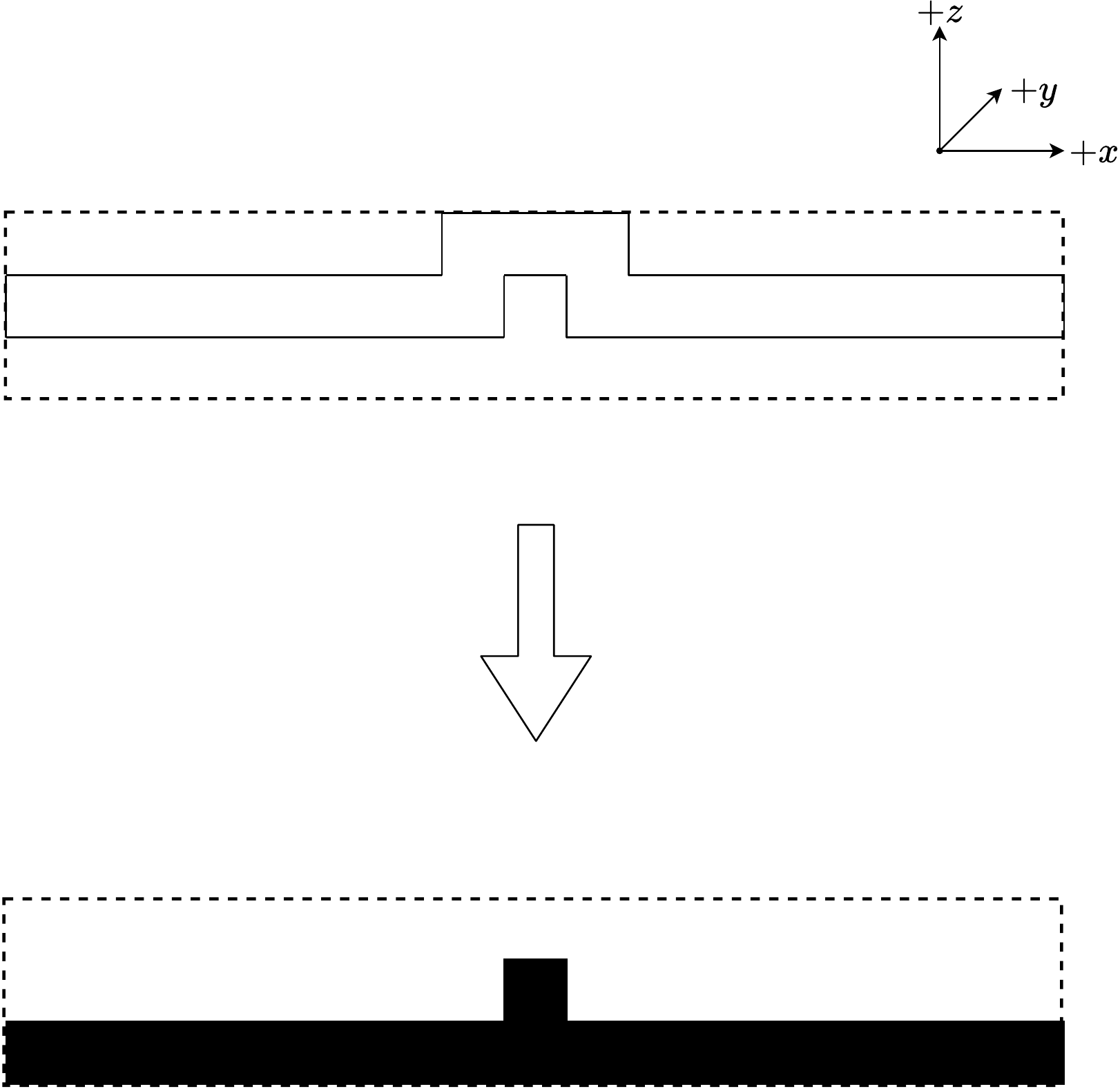}
		\caption{$q_i^+$ Construction}
		\label{fig:raised}
	\end{subfigure}
	\caption{Construction of $q_i^-$ and $q_i^+$. Top figures show the cross-sectional view of a vertex piece in normal (raised) position. Bottom figures show construction of $q_i^-$ ($q_i^+$) based on the position of the vertex piece. Dotted line shows box boundary, and shaded region shows $q_i^-$ ($q_i^+$) piece.}
	\label{fig:q pieces}
\end{figure}

\begin{lemma}\label{lemma normal}
	If $p_i$ is placed in the empty bounding box in normal position, $q_i^-$ can be placed subsequently, after which no other pieces can fit into the board.
\end{lemma}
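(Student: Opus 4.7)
The plan is to split the lemma into two sub-claims: (a) that $q_i^-$ has a valid placement in the box once $p_i$ has been placed in normal position, and (b) that after both pieces are in place, no remaining polycube fits anywhere.

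For (a), I would fix coordinates so that $p_i$ occupies voxels only with $z \in \{1, 2\}$: its vertex column, its edges (away from crossovers), its support beams (away from crossovers), and its key all sit at $z = 1$, while the tops of the crossovers on edges and support beams occupy a small set of voxels at $z = 2$. The complement of $p_i$ in the bounding box therefore consists of: the entire $z = 3$ slice, the $z = 2$ slice minus the crossover tops, and the $z = 1$ slice minus $p_i$'s footprint. By the very construction of $q_i^-$, the piece is exactly this complement with the $1 \times 1 \times 1$ voxels of $z = 1$ directly below the crossover tops removed. Since the construction paragraph already argues $q_i^-$ is simply connected, there is a legitimate polycube realizing this shape, and placing it in the identity orientation at the obvious location fills precisely these cells.

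For (b), the only empty voxels remaining after both placements are the under-crossover unit cubes in $z = 1$. I would verify that these are pairwise face-disjoint: each sits directly beneath a crossover top whose $xy$-position is centered on some vertex-column $x$-coordinate $x = 4j$, so two such cubes sitting over distinct columns are separated in $x$ by at least $4$, while two on the same column but on different edges or support beams are separated in $y$ by at least $2$; in $z$, each such cube is sandwiched between the box floor below and the filled crossover top above. Each remaining empty voxel is thus isolated. The polycubes still in play are $p_j$ for $j \neq i$, the $q_j^{\pm}$ for $j \neq i$, $q_i^+$, and the $G$-mold, each a connected polycube of size much greater than one. Since no connected polycube of size at least two can be placed into a single isolated voxel, no further move is possible.

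The only real obstacle is carefully reading off the geometry of the crossover gadget from Figure~\ref{fig:crossover} to be sure the under-crossover cells are truly isolated; this amounts to checking face-adjacencies within a three-voxel-tall layout and should follow mechanically. The rest of the argument is a straightforward size-and-connectivity count.
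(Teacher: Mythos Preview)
Your split into (a) and (b) matches the paper, and (a) is handled identically: $q_i^-$ was defined precisely as the (near-)complement of $p_i$ in normal position, so it fits by construction.

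For (b), the paper takes a shorter route than you do. Rather than locating the residual empty voxels and arguing they are isolated unit cubes, the paper simply observes that after both placements the $z=3$ layer is entirely filled by $q_i^-$ and the $z=2$ layer is entirely filled by either $q_i^-$ or a crossover top; hence any further piece would have to lie wholly within the $z=1$ layer. It then invokes the global fact that \emph{every} piece in the construction has extent at least~2 in each of the three axis directions, so no piece can fit into a single $z$-layer. This sidesteps the crossover-geometry inspection you propose (checking the $x$- and $y$-separations of the under-crossover cells and their face-neighbours). Your argument is correct and in fact proves a slightly stronger statement --- even a hypothetical piece that was thin in one dimension but of size $\geq 2$ would be excluded --- but that extra strength is not needed here, and the paper's thickness-$\geq 2$ observation gets to the conclusion with less case analysis. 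One small point: in your description of $q_i^-$ you speak of removing a single $1\times1\times1$ voxel under each crossover, whereas the paper's $q_i^-$ omits every $z=1$ voxel below the full $1\times1\times3$ crossover tube; the two outer such voxels are already occupied by $p_i$'s edge, so your conclusion about which voxels remain empty is still right, but the intermediate description of $q_i^-$ should match the paper's.
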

\begin{proof}
	$q_i^-$ was constructed around $p_i$'s placement in the box in normal position, so we know once $p_i$ is placed in normal position $q_i^-$ will fit. Once they are placed together, $q_i^-$ fills the entire $z=3$ plane, and every voxel in the $z=2$ plane is filled either by an edge crossover or by $q_i^-$ (this is how we constructed $q_i^-$). So if any piece were to fit, it would have to fit entirely in the $z=1$ plane. But by construction, every piece has thickness at least 2 in every dimension so no piece will fit.
\end{proof}

\begin{lemma}\label{lemma raised}
	If $p_i$ is placed in the empty bounding box in raised position, $q_i^+$ can be placed subsequently, after which no other pieces can fit into the board.
\end{lemma}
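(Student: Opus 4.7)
The plan is to prove this lemma in complete parallel to Lemma~\ref{lemma normal}, with the roles of the $z=1$ and $z=3$ planes interchanged. The claim splits into two parts: first that $q_i^+$ actually fits once $p_i$ is placed in raised position, and second that after both placements, no further piece can be legally placed anywhere in the remaining empty region.

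For the first part, the argument is immediate from the construction of $q_i^+$. Recall that $q_i^+$ was defined by taking the $N \times M \times 3$ box, removing everything occupied by $p_i$ in raised position, and then additionally carving out those single voxels in the $z=3$ plane sitting directly above an edge, beam, or key of $p_i$. By definition $q_i^+$ is disjoint from $p_i$ in raised position and lies entirely inside the bounding box, so placing $q_i^+$ immediately after $p_i$ is valid.

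For the second part, I would show that after both placements, every empty voxel of the box lies in the $z=3$ plane. The $z=1$ plane is completely filled by $q_i^+$ by construction. For the $z=2$ plane, in raised position $p_i$'s vertex column sits entirely in $z=2$, along with the non-crossover portions of its edges, support beams, and key. The $q_i^+$ piece was then defined to fill \emph{every} remaining unoccupied voxel at $z=2$, so together $p_i$ and $q_i^+$ leave no empty voxel at $z=2$. Thus the only empty voxels are the single-voxel-tall gaps at $z=3$ sitting directly above edges, beams, and keys of $p_i$, each of which is $1$ unit thick in the $z$-direction.

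I would then conclude by appealing to the same observation used in Lemma~\ref{lemma normal}: every playable piece in this construction has bounding-box thickness at least $2$ in every dimension. Any remaining vertex piece $p_j$ has a crossover that bulges out in the $z$-direction, while the $G$-mold and any $q_j^\pm$ piece each span all three $z$-layers of the box by design. Since the remaining empty region is contained in a single $z$-layer, no remaining piece can fit, finishing the proof. The only real verification step here — and the one I would double-check most carefully — is the claim that the $z=2$ plane is entirely covered, since this depends on the precise way $q_i^+$'s footprint complements $p_i$'s in raised position; once that is confirmed, the rest of the argument mirrors the normal-position case exactly.
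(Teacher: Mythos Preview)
Your proposal is correct and follows essentially the same argument as the paper: both proofs note that $q_i^+$ fits by construction, that after placement the $z=1$ and $z=2$ planes are completely filled (by $q_i^+$ alone and by $q_i^+ \cup p_i$ respectively), and that no remaining piece can fit in the single $z=3$ layer since every piece has thickness at least $2$. Your write-up is simply a more explicit version of the paper's terse ``almost identical to Lemma~\ref{lemma normal}'' proof.
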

\begin{proof}
	The proof is almost identical to that of Lemma~\ref{lemma normal}. Again, $q_i^+$ was constructed around $p_i$'s placement in the box in raised position, so we know once $p_i$ is placed in raised position $q_i^+$ will fit. Once they are placed together, $q_i^+$ fills the entire $z=1$ plane, and every voxel in the $z=2$ plane is filled either by $q_i^+$ or by $p_i$. Again, this is because we constructed $q_i^+$ such that it filled every voxel in the $z=2$ plane not occupied by $p_i$ in raised position. So for a piece to fit it would have to squeeze into the $z=3$ plane, and for the reason stated above no piece meets this requirement.
\end{proof}

\begin{lemma}
	If  $q_i^-$ or $q_i^+$ is placed in the empty bounding box, $p_i$ can be placed subsequently, after which no other pieces can fit into the board.
\end{lemma}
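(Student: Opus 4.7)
The plan is to mirror the arguments of Lemmas~\ref{lemma normal} and~\ref{lemma raised}, exploiting the fact that $q_i^-$ and $q_i^+$ were defined as (near-)complements of $p_i$'s normal and raised placements, respectively. The residual empty region after placing either $q$-piece should therefore contain a $p_i$-shaped cavity together with only a handful of thin, dead cells.

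First, I would establish that each of $q_i^-$ and $q_i^+$ has an essentially forced placement in the $N \times M \times 3$ box, just as the $G$-mold does in Lemma~\ref{placement}. By construction, $q_i^-$ fills the entire $z=3$ layer and contains voxels in all three $z$-layers, so it spans every dimension of the box; the same holds for $q_i^+$, which fills the entire $z=1$ layer. Thus each of these pieces can be placed in only one spot up to the symmetries of the bounding box.

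Next, consider the canonical placement of $q_i^-$. By the definition of $q_i^-$, the voxels it leaves empty are exactly those occupied by $p_i$ in normal position, together with a collection of isolated $1 \times 1 \times 1$ gaps immediately beneath each edge crossover of $p_i$. Hence $p_i$ fits in normal position, filling the large $p_i$-shaped empty region. After this placement, only the isolated gap voxels remain empty; since every piece in the reduction has bounding dimensions at least $2$ in each direction (the same minimum-thickness observation invoked in the proofs of Lemmas~\ref{lemma normal} and~\ref{lemma raised}), no further piece can fit. The argument for $q_i^+$ is symmetric: after it is placed, the empty region is $p_i$-in-raised-position together with the isolated $1\times 1\times 1$ cells in the $z=3$ layer directly above each edge, support beam, or key stub; placing $p_i$ in raised position fills the main region, and only the too-small isolated gaps remain. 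Non-canonical placements of $q_i^{\pm}$ are handled by applying the corresponding rigid motion of the box to $p_i$ and reasoning in the transformed frame.

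I expect the principal obstacle to be the bookkeeping around box symmetries: checking that every orientation of $q_i^{\pm}$ that is geometrically feasible is matched by a corresponding orientation of $p_i$ that is allowed under the permitted transformations (rotations only, or rotations plus reflections), so that $p_i$ really can be placed into the complementary cavity in each case. Once that correspondence is in place, the remainder of the argument is an immediate consequence of the explicit construction of the $q$-pieces together with the minimum-thickness fact already used in earlier lemmas.
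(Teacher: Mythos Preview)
Your proposal is correct and follows the same route as the paper's proof: observe that $q_i^-$ and $q_i^+$ span every dimension of the box and therefore have a unique placement up to symmetry, then invoke the construction (equivalently, Lemmas~\ref{lemma normal} and~\ref{lemma raised}) to conclude that $p_i$ fits into the complementary cavity and the leftover cells are too thin for any piece. The paper compresses all of this into a single sentence citing the two prior lemmas, whereas you spell out the residual-gap analysis and the symmetry bookkeeping explicitly, but the underlying argument is the same.
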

\begin{proof}
	Since  $q_i^-$ and $q_i^+$ span every dimension of the bounding box they only have one valid placement up to rotation and reflection, so this is an immediate consequence of Lemmas~\ref{lemma normal} and~\ref{lemma raised}.
\end{proof}

\begin{lemma}\label{mold blocks blockers}
	Once the $G$-mold is placed, no $q_i^-$ or $q_i^+$ can fit in the board.
\end{lemma}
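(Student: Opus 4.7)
The plan is to argue that once the $G$-mold is in place, the only empty voxels in the $N \times M \times 3$ box are those of the cavity, and the cavity is far too thin to accept a piece as bulky as $q_i^-$ or $q_i^+$. By the mold's construction, the $z = 3$ layer is completely filled by the mold, the $z = 2$ layer is filled except at the isolated crossover bumps, and the $z = 1$ layer is filled outside the one-voxel-wide tubes of the wiring region, support beams, and keys; in each $z$-layer there remain many cells of the $N \times M$ footprint that belong to the mold.

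The key observation is that each blocker piece contains, by construction, a full solid $N \times M \times 1$ slab: namely the $z = 3$ plane in the case of $q_i^-$ and the $z = 1$ plane in the case of $q_i^+$. So any valid placement of either piece must embed such an axis-aligned $N \times M \times 1$ rectangle entirely in empty voxels. Because the bounding box has $z$-extent only $3$, while $N$ and $M$ are both strictly greater than $3$ for any nontrivial input graph, this slab cannot be rotated so that either of its long sides lies along the $z$-axis; it must lie flat in some plane $z = z_0$ with $z_0 \in \{1, 2, 3\}$.

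The final step is to verify that no such $z$-plane provides a full $N \times M$ empty region. The $z = 3$ plane contains no empty cells at all; the $z = 2$ plane is empty only at the isolated crossover bumps; and at $z = 1$ the mold fills, for example, every $(x, y)$ whose $y$-coordinate lies strictly between two consecutive support beams and whose $x$-coordinate is not that of any vertex column. In each layer the mold blocks many cells of the $N \times M$ footprint, so no full $N \times M$ empty rectangle exists, and neither $q_i^-$ nor $q_i^+$ can be placed. I don't anticipate a real obstacle: the argument is essentially a slab-placement and dimension-counting check, and the only care needed is in naming an explicit mold cell in each candidate $z$-layer, which follows directly from the construction.
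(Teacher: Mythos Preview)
Your proposal is correct and follows essentially the same approach as the paper's proof: both hinge on the observation that every $q$ piece contains a full $N \times M \times 1$ slab while the $G$-mold occupies at least one voxel in every $z$-plane, so no such slab can be placed. Your write-up is more explicit than the paper's two-line argument, in particular addressing why the slab cannot be rotated out of a $z$-plane, but the idea is the same.
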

\begin{proof}
	Note that every $q$ piece fills one of the $z$-planes entirely. Once the $G$-mold is placed, there is at least one voxel filled in every $z$-plane by construction, so it is impossible to place a $q$ piece since there is no empty $z$-plane remaining.
\end{proof}

\subsubsection{Play}
\begin{lemma}
    The first player to move from an initially empty board state has a winning strategy in this instance of Polycube Packing if and only if the second player has a winning strategy in Node Kayles on $G$.
\end{lemma}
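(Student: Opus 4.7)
The plan is a two-step argument: first show that the first player's only non-suicidal opening move is the $G$-mold, and then invoke Lemma~\ref{hard with mold} to translate the rest of the game into Node Kayles, keeping careful track of who moves when.

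First I would verify the forcing step. Suppose the first player opens with any vertex piece $p_i$; since $p_i$ spans the $x$ and $y$ dimensions of the box, it must sit in either normal or raised position, and Lemmas~\ref{lemma normal} and \ref{lemma raised} guarantee that the second player can respond with $q_i^-$ or $q_i^+$ respectively, after which no piece fits. Because the $q_i^\pm$ admit the same reflections and $180^{\circ}$ rotations as $p_i$ (as observed in the construction), this also covers any mirrored or rotated placement of $p_i$. If instead the first player opens with some $q_i^-$ or $q_i^+$, then by the lemma immediately preceding the current statement the second player can follow with $p_i$ and end the game. In both cases the first player is blocked on their second turn and loses, so any winning opening for the first player must be the $G$-mold.

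Next I would apply the post-mold picture. Once the $G$-mold is placed, Lemma~\ref{mold blocks blockers} removes every $q_i^\pm$ from future play (each $q$-piece fills an entire $z$-plane, which the mold prevents), so the remaining pieces are precisely the vertex pieces. The position is therefore exactly the starting position of Lemma~\ref{hard with mold}, and the next player to move is the polycube second player. By Lemma~\ref{hard with mold}, the polycube second player wins this residual game if and only if the first player wins Node Kayles on $G$.

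Combining the two steps: if the Node Kayles second player wins, then the Node Kayles first player loses, and so by Lemma~\ref{hard with mold} the polycube second player loses once the mold is down; hence the polycube first player wins by opening with the mold. Conversely, if the polycube first player wins, the forcing step requires their opening move to be the mold, after which the polycube second player must lose, so the Node Kayles first player loses, so the Node Kayles second player wins. The main obstacle is just bookkeeping: making sure the second player's forced response covers every orientation of the opening piece and that the parity of moves lines up correctly. All the geometric content is already packaged in Lemmas~\ref{lemma normal}, \ref{lemma raised}, and \ref{mold blocks blockers}, so what remains is essentially case analysis.
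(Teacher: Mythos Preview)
Your argument is correct and mirrors the paper's proof: establish that any non-mold opening loses immediately (via the $p_i \leftrightarrow q_i^{\pm}$ pairing in Lemmas~\ref{lemma normal}--\ref{mold blocks blockers}), then apply Lemma~\ref{hard with mold} to the post-mold position with the appropriate parity shift. Your treatment is slightly more explicit than the paper's about orientation coverage and the move-order bookkeeping, but the structure and the lemmas invoked are the same.
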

\begin{proof}
 First, say Player I has a winning strategy in Node Kayles on $G$ (where they play first). Then they will have a winning strategy in this instance of Polycube Packing where Player II plays first, since if Player II plays a non-$G$-mold piece on the first move, Player I wins immediately by Lemmas~\ref{lemma normal} -~\ref{mold blocks blockers}, and if Player II plays the $G$-mold, then Lemma~\ref{hard with mold} tells us that on the next turn this game is equivalent to Node Kayles on $G$ where Player I goes first, so Player I can force a win. On the other hand, if Player I can't force a win in Kayles when they play first, then in the Polycube Packing game where Player II plays first, on their first turn Player II can play the $G$-mold, and again due to Lemma~\ref{hard with mold} the game becomes equivalent to Kayles on $G$ with Player I moving first and thus Player I cannot force a win.
\end{proof}

\subsection{Partisan Polycube Packing}
In \cite{sch}, Bigraph Node Kayles is also proved PSPACE-complete. This game is identical to Node Kayles except the vertices are partitioned into two independent sets, one for each player, and a player is only allowed to play vertices from their set. It is easy to see that the same exact reduction shows that the partisan version of Polycube Packing is hard in the case where the $G$-mold has already been placed; we construct the same reduction as before, and give each player only the vertex pieces corresponding to their vertices in $G$. To extend to the case of an initially empty box, we will first generate the $q_i^-$ and $q_i^+$ and switch the roles of who plays first, just as before. Then, we put the $G$-mold in the piece set of the first player, and the $q_i^-$ and $q_i^+$ in the piece set of the second player. By Lemmas 3.4-3.7, we see that the first player is forced to play the $G$-mold, at which point the game becomes identical to Bigraph Node Kayles on $G$. So Partisan Polycube Packing is also PSPACE-hard, even from an initially empty box.

\section{2-Player 3D $n$-tris}\label{tetris}
One simple way to convert Tetris into a 2-player packing game is as follows: players alternate taking control of the keyboard in a standard game of Tetris, and whichever player causes the game to end on their turn loses. In this section we will show that if we generalize this game by introducing a third dimension and allowing the pieces to be arbitrarily-sized polycubes, deciding a winner is PSPACE-complete. The main distinctions between this game and 2-Player Polycube Packing are that in this game the pieces have a fixed order of play, and the rules for piece placement are much more restricted. Following \cite{tetris}, we will study the ``offline'' version of 2-Player Tetris, where the entire sequence of pieces is known to all parties ahead of time. The complexity of a rather different 2-player version of Tetris was analyzed in \cite{2ptetris}, in which players have separate boards, and when one player clears a line, their ``garbage'' appears on the other player's screen.

2-Player 3D $n$-tris is defined as follows: the input is an $N \times M \times K$ box, oriented so that one face is considered the ``top'', and a fixed sequence of polycubes $S = (s_1, ..., s_n)$. On turn $i$, the current player places $s_i$ into the box by a continuous downward motion (using any number of rotations) from the top face without intersecting any other pieces. Every piece must be placed with support, meaning its bottom must touch the top of some previously placed piece; another way to phrase this is that there is gravity in the box and pieces fall until they are caught by other pieces. Player 1 moves on odd-numbered turns, Player 2 moves on even-numbered turns, and the first player who cannot move loses. To make this game more similar to traditional Tetris, one could add the feature of plane-clearing: when some $xy$-plane is filled completely, its contents disappear and everything above it falls down by one unit.

\begin{theorem}
Deciding a winner in an instance of 2-Player 3D $n$-tris starting from an initially empty board is PSPACE-complete, with or without plane-clearing.
\end{theorem}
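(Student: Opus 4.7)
The plan is to reduce from Node Kayles, adapting the cavity/mold technique from Section~\ref{polypacking} to accommodate both the fixed piece sequence and the continuous downward-motion constraint. Given a graph $G=(V,E)$, I would construct an $N \times M \times K$ box (with $M$ vertical, large enough to contain a bottom cavity plus a clear access chute above it) and a piece sequence split into two phases. The first phase is a long preamble of ``setup pieces,'' each shaped so as to have a unique valid placement given the current stack --- achieved, as with the support beams in Section~\ref{polypacking}, by forcing each piece to span enough of the box that it cannot translate laterally, and by using the already-placed structure as support in only one orientation. Collectively these setup pieces build a $G$-cavity analog at the bottom of the box: $n$ open vertical ``vertex tubes,'' one per vertex, together with horizontal ``cross-tubes'' whose placements encode the edges of $G$, with the crossover gadget of Figure~\ref{fig:crossover} letting cross-tubes bypass unrelated vertex tubes without interference. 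Because every setup piece is geometrically forced, both players are compelled to cooperate in building the cavity exactly, regardless of strategic intent.

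The second phase then consists of identical ``vertex tokens.'' A vertex token is a small polycube shaped so that (i) it fits snugly inside any vertex tube and can descend one from the top of the box, (ii) upon coming to rest at the bottom of tube $i$, its lateral extensions exactly fill the cross-tubes incident to $v_i$, and (iii) once such a cross-tube has been partly filled by a token placed in the neighboring vertex's tube, the remaining cross-section of that tube is too narrow for any vertex token to pass. Condition (iii) precisely encodes the Node Kayles blocking rule: tube $j$ admits a legal placement on the current turn if and only if no neighbor of $v_j$ has a token placed yet. Taking $|V|$ such tokens in the sequence --- an upper bound on the length of any Node Kayles play --- the post-preamble phase becomes an exact simulation of Node Kayles, so a Lemma~\ref{hard with mold}-style argument shows that the player to move after the preamble wins the $n$-tris instance if and only if the corresponding player wins Node Kayles on $G$. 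The preamble will be padded with extra unique-placement pieces as needed to make the intended player the one on move when the selection phase begins.

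The main technical obstacle will be designing the token-and-cavity geometry so that (i)--(iii) hold simultaneously under continuous downward motion: the token must freely reach the bottom of any currently-unblocked tube, yet have no legal resting position anywhere inside a blocked one (otherwise the blocked position would still be a legal play). I plan to handle this by confining the token's lateral extensions to its bottommost layer and matching them to floor-level notches in the cavity, so that during descent the token presents only a simple vertical profile inside the tube and only ``keys'' into the cross-tubes at rest; a partial cross-tube fill then pinches the tube at a height where any further token is simultaneously too wide to pass through and too thin to draw support from the pinch point, leaving no legal resting position at all in a blocked tube. For the plane-clearing variant, it should suffice to thread a thin permanent vertical column of empty space through the preamble structure so that no horizontal plane is ever completely filled, after which the plane-clearing rule is never triggered and the rest of the analysis goes through unchanged.
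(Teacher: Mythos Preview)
Your approach is quite different from the paper's, and the difference matters because your central mechanism has a gap. The paper does not try to make a blocked location admit \emph{no} placement. Instead it drops a single mold piece (not a multi-piece preamble) and then runs the game in $n$ phases: during phase $i$ the in-phase player is handed $n$ distinct phase-$i$ vertex pieces and must place exactly one into a shared ``region of contest'' and the remaining $n-1$ into a phase-$i$ ``dump'' cavity sized for exactly $n-1$ pieces, while the out-of-phase player is handed long ``garbage'' columns that fit only into a deep garbage chute (so their turns are forced). The in-phase player loses precisely when no vertex piece fits the region of contest, since then the $n$th piece has nowhere to go. Compatibility in the region of contest is encoded by giving each vertex piece one column per incident edge, so two pieces coexist there iff their vertices share no edge.

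Your scheme instead hinges on a blocked vertex tube admitting no legal resting position for a token, and the argument you give for this does not work under the game's physics. A piece that cannot descend further simply rests where it is: ``too wide to pass through the pinch'' means the token's bottom layer is supported by the obstruction and comes to rest on it, which \emph{is} a legal placement (its bottom touches the top of a previously placed piece). The phrase ``too thin to draw support from the pinch point'' is in direct tension with being too wide to pass---a piece cannot be simultaneously stopped from descending and unsupported. A salvageable variant is to make each tube exactly as deep as a token is tall, so that resting on any cross-tube fill forces the token to protrude above the box; but then you must also rule out every other resting place (atop a token already occupying that same tube, on the upper surface of the preamble structure, anywhere in the access chute), and none of these are addressed. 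The multi-piece preamble is a further substantial commitment: under the drop-from-above rule you cannot slide pieces beneath existing ones, so building an internal cavity with every setup piece uniquely forced is a nontrivial construction in its own right---one the paper sidesteps entirely by making the mold a single piece whose bounding box equals the game box.
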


\subsection{Overview of Reduction}
Given an instance of Node Kayles in the form of a graph $G$ with $n$ vertices, we will construct a 2-Player 3D $n$-tris game which will play out in $n$ phases, each phase corresponding to a single turn in Node Kayles. First, we construct a very large piece called the mold which fills most of the box and leaves several cavities open. This will be the first piece in the piece sequence and the first player will be forced to place it with the cavities facing up under threat of immediate loss. After the mold is placed, the phases begin. In the $i^{th}$ phase, the player who moves on turn $i$ in Kayles will be presented with a sequence of $n$ pieces, one for each vertex in $G$. The player will be forced to place one of these pieces into the ``region of contest'', and the rest into the ``phase $i$ dump''. The piece placed into the region of contest represents the vertex they select on that turn in Kayles. In a similar fashion to the 2-Player Polycube Packing reduction, the region of contest will represent the current position of the Kayles game. In order to allow one player to make $n$ moves in a row while the other player does nothing, we will construct a ``garbage chute'', which is a very deep $1 \times 1 \times k$ cavity in the mold, and present the other player with ``garbage pieces'' which are $1 \times 1 \times \frac{k}{n^2}$ columns. If we make $k$ sufficiently large relative to the rest of the construction, a player presented with garbage will be forced to throw it down the garbage chute, and this allows us to effectively skip that player's turn, thus completing the reduction. Finally, to extend hardness to the case with plane-clearing, we will modify the mold so that it is impossible for any plane in the box to ever be full.

\subsection{Hardness Without Plane-Clearing}
Given a graph $G$, we will construct an instance of 2-Player 3D $n$-tris such that the first player has a winning strategy in this game if and only if the first player in Kayles does. Let $n$ denote the number of vertices in $G$ and let $m$ denote the number of edges.

\subsubsection{Phases}
As described above, this game will play out in $n$ phases, each corresponding to a player's turn in Kayles. The odd phases correspond to Player 1's turns in Kayles, and the even phases correspond to Player 2's turns, so we will say that Player 1 (2) is ``in phase'' during the odd (even) phases and ``out of phase'' during the even (odd) phases. The phase that the game is currently in will be fully defined by the next piece in the piece sequence during a certain position of 2-Player 3D $n$-tris; Section~\ref{sequence} specifies how the piece sequence is partitioned into phases.

\subsubsection{Mold: General Structure}
First, we will construct a polycube which fills nearly the entire box, but leaves open several deep cavities on one side, which we will call the ``play side''. We present the general structure of the mold here, and we will specify more precisely the dimensions of the cavities after we construct the other relevant pieces. Essentially, the mold contains a series of $n + 1$ large cavities, plus a small hole in the corner. At the bottom is a cavity we will refer to as the region of play, then above that is a cavity we will call the phase 1 dump, then above that the phase 2 dump, continuing all the way up to the phase $n$ dump at the top. Additionally, at the top left corner there is a narrow and very deep cavity called the garbage chute. Figure~\ref{fig:tetris frame} shows the mold with its play side facing out of the page.

\begin{figure}[H]
	\centering
	\includegraphics[scale=.5]{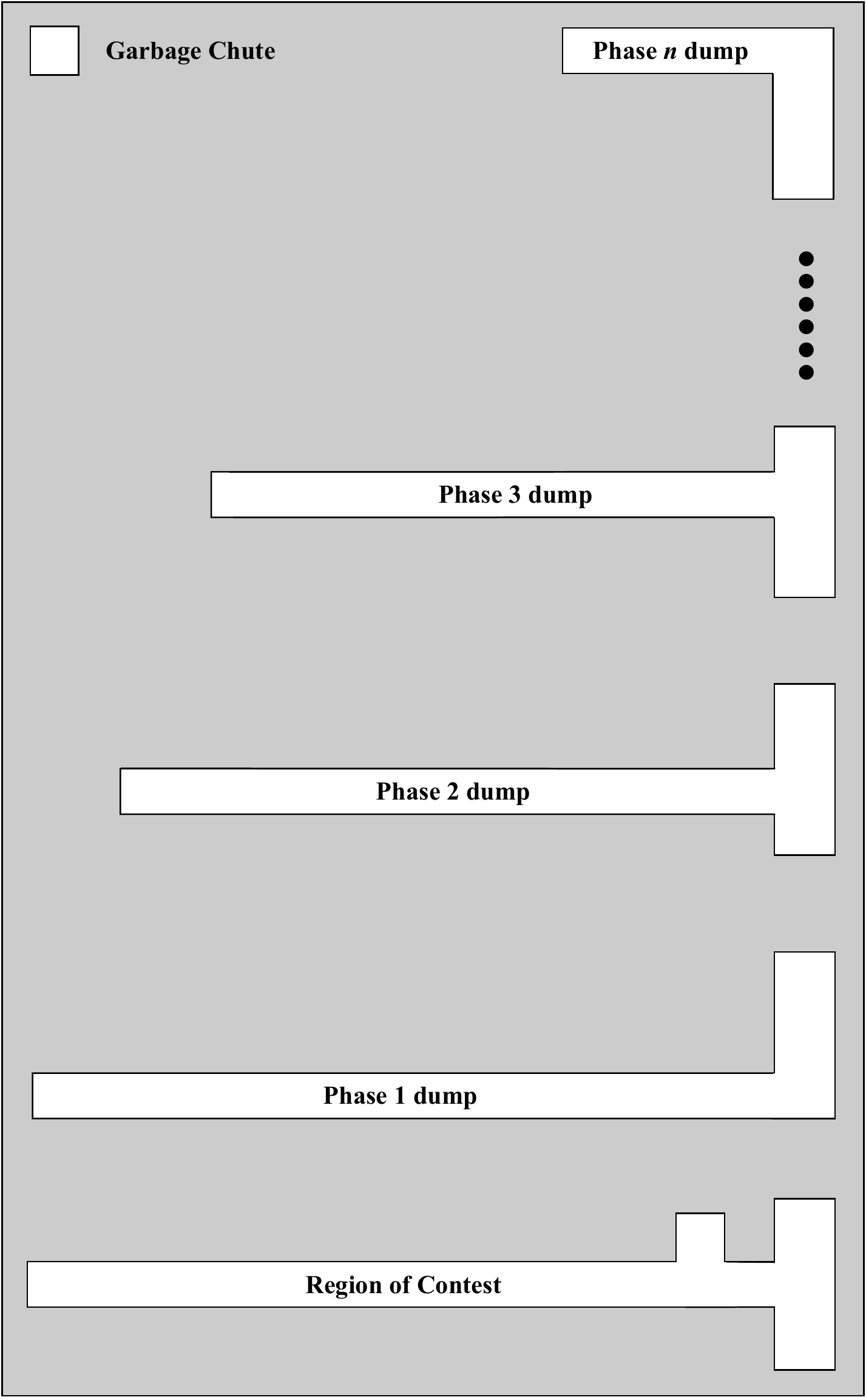}
	\caption{Mold with play side up. Grey region is filled in all the way to the top of the box, while white regions are cavities reaching towards (but not touching) the bottom of the box.}
	\label{fig:tetris frame}
\end{figure}

\subsubsection{Vertex Pieces}
We will construct a vertex piece for each vertex and each phase. For the phase $i$ vertex piece for a vertex $v$, we will construct a polycube which is the union of the following smaller polycubes:
\begin{enumerate}
    \item For each edge $e_h$ incident on $v$, construct an ``edge column'': $[h-1,h] \times [0,1] \times [0,n]$
    \item Construct a ``binding'' which holds the edge columns together: $[0,m] \times [-1,0] \times [i-1, i]$
    \item Construct a ``handle'' which extends off the binding: $[\lceil{\frac{m}{2}}\rceil, \lceil{\frac{m}{2}}\rceil+1] \times [i - (n + 3), -1] \times [i-1,i]$
    \item Construct a ``stub'' which hangs off the side of the handle, one unit away from where the handle meets the binding: $[\lceil{\frac{m}{2}}\rceil-1, \lceil{\frac{m}{2}}\rceil] \times [-3, -2] \times [i-1,i]$. In general we will consider the stub to be a part of the handle.
\end{enumerate}
This construction is shown for an example vertex in Figure~\ref{fig:tetris vertex}.
\begin{figure}[H]
	\centering
	\begin{subfigure}{.475\textwidth}
		\centering
		\includegraphics[width=.8\linewidth]{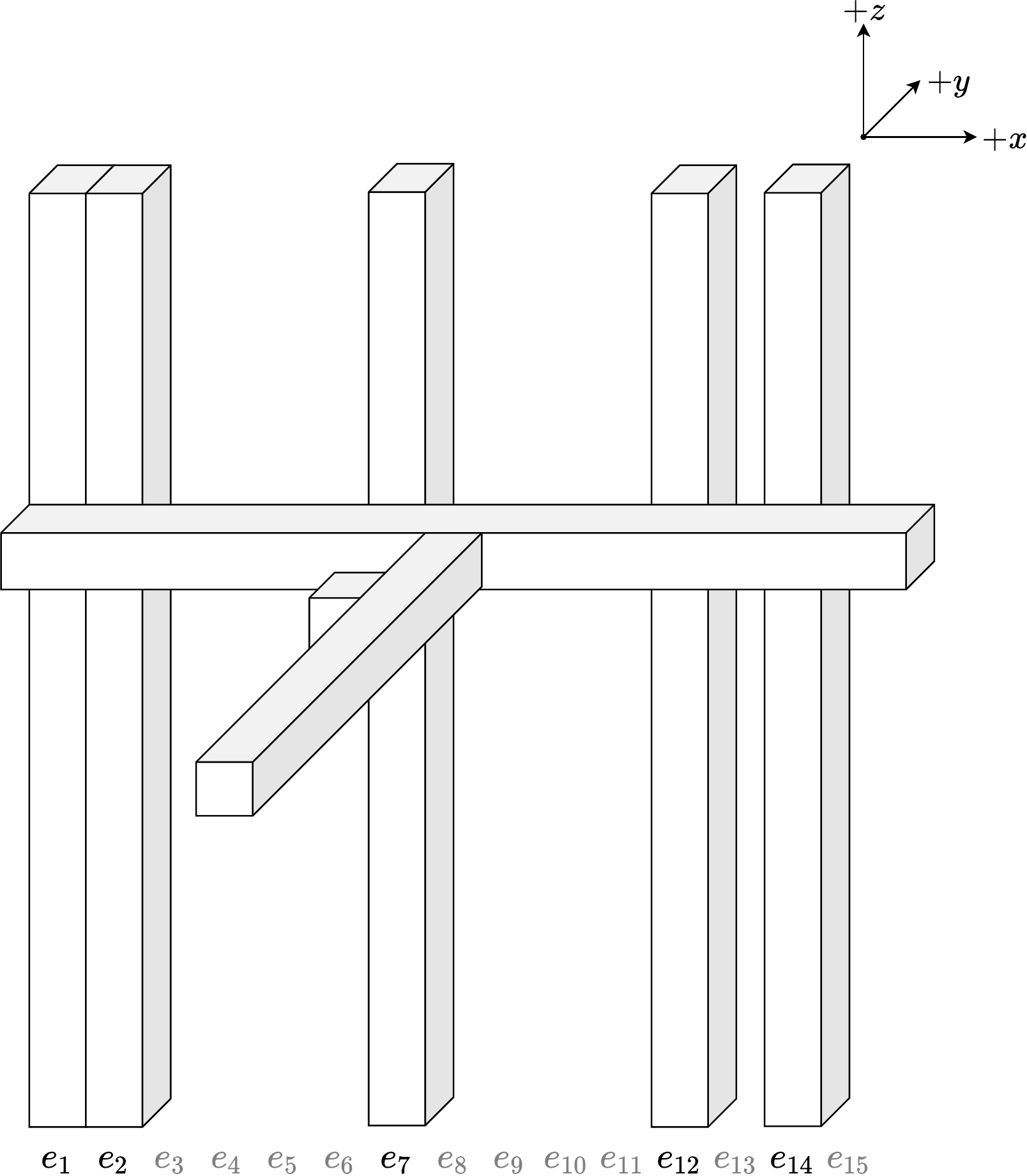}
		\caption{Side view of vertex piece}
		\label{fig:vertex side}
	\end{subfigure}%
		\begin{subfigure}{.525\textwidth}
		\centering
		\includegraphics[width=.8\linewidth]{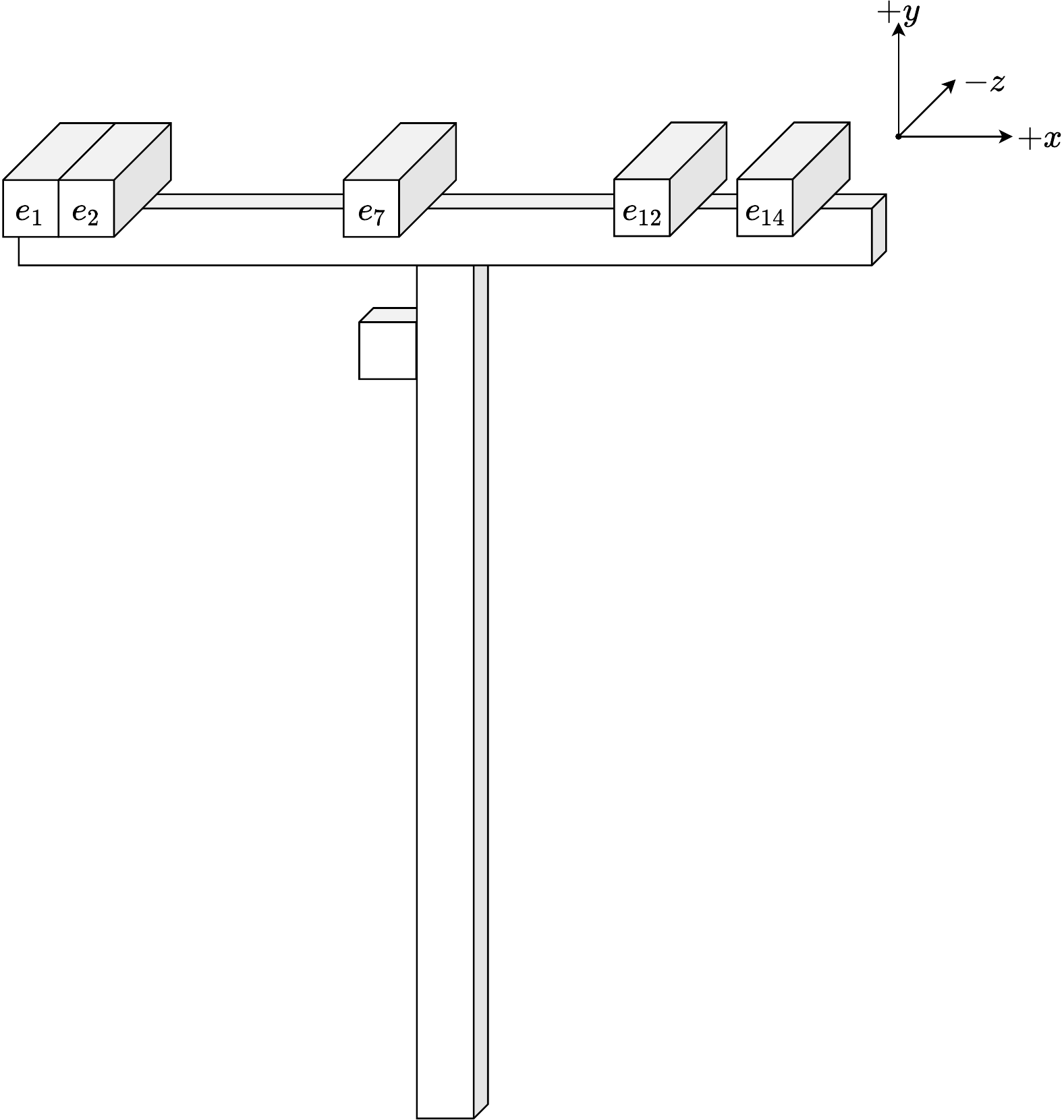}
		\caption{Top-down view of vertex piece}
		\label{fig:vertex top}
	\end{subfigure}
	\caption{Phase 8 vertex piece for a vertex incident on edges $e_1, e_2, e_7, e_{12}, e_{14}$.}
	\label{fig:tetris vertex}
\end{figure}

\subsubsection{Mold: Specific Dimensions}
First we will construct the cavity for the region of contest. Figure~\ref{fig:region of contest} shows the dimensions of an $xy$ cross section of this cavity; extending this along the $z$-dimension by $n$ units gives us the complete construction.

Now we will construct the dumps. For each phase $i$, Figure~\ref{fig:dump} shows the $xy$ cross section of the phase $i$ dump; extending this along the $z$-dimension by $m(n-1)$ units gives us the complete construction.
\begin{figure}[H]
	\centering
	\begin{subfigure}{.5\textwidth}
		\centering
		\includegraphics[width=.8\linewidth]{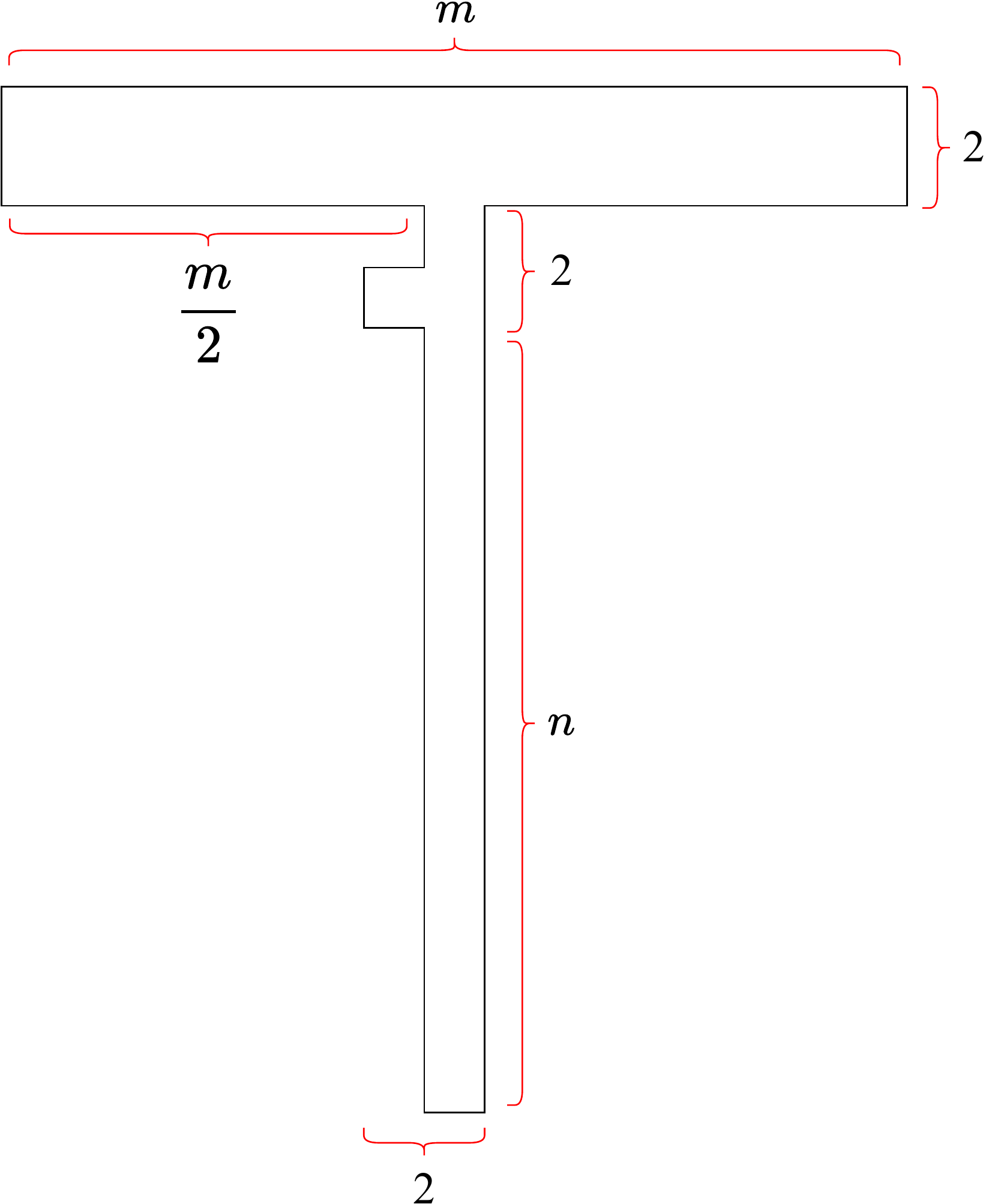}
		\caption{Cross Section of Region of Contest}
		\label{fig:region of contest}
	\end{subfigure}%
		\begin{subfigure}{.5\textwidth}
		\centering
		\includegraphics[width=.8\linewidth]{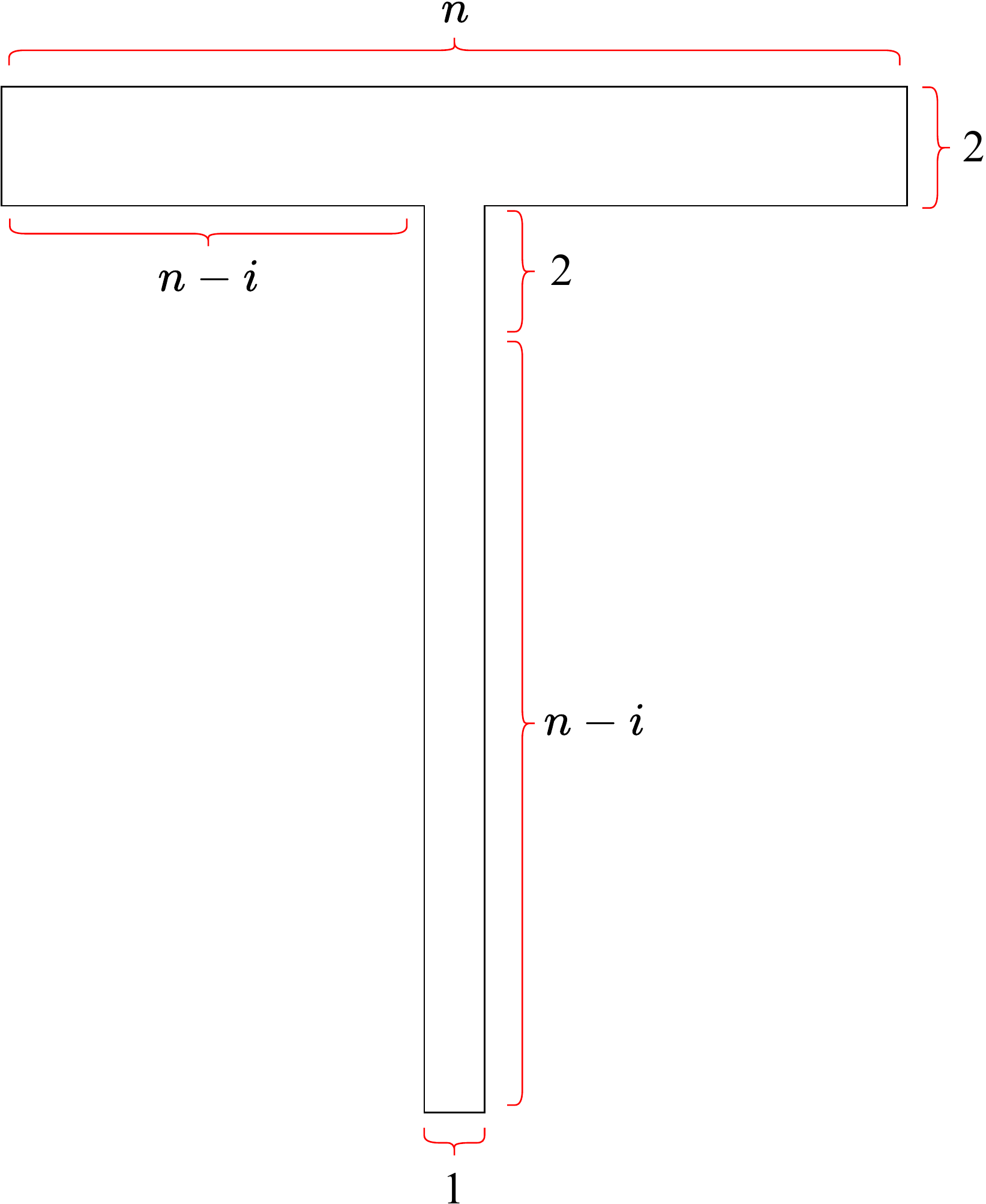}
		\caption{Cross Section of Phase $i$ Dump}
		\label{fig:dump}
	\end{subfigure}
	\caption{Construction of Region of Contest and Dumps}
	\label{fig:contest and dumps}
\end{figure}
Now, arrange these cavities as indentations along the top of a sufficiently large box as shown in Figure~\ref{fig:tetris frame}; the exact arrangement of the various cavities along the surface does not matter, so long as the openings of the cavities are on the same face and they are all pairwise disjoint. This box with various cavities on one side will will be called the ``mold'', and the side of the mold where the cavities open up is the ``play side''. This gives us an almost complete construction of our mold; to finish it we must add a few more very narrow cavities called ``garbage chutes''.

\subsubsection{Garbage and Garbage Chutes}
Take $L$ to be 1 unit greater than maximum length of any dimension over all cavities constructed thus far. Now, extend a solid region from the bottom of the mold along the $z$-dimension until the mold's profile along that dimension is at least $Ln(n-1) + 1$ units long. Next, build an additional cavity that opens up onto the play side whose $xy$ cross section is $1 \times 1$ and which has a depth of $Ln(n-1)$ along the $z$-dimension. This cavity will be referred to as the ``garbage chute'', and an example placement is shown in Figure~\ref{fig:tetris frame}. Now, construct $n(n-1) + 1$ ``garbage pieces'' which are simply $1 \times 1 \times L$ columns. Finally, add one additional garbage chute on each face of the mold which isn't the play side; these cavities will be shallower than the main garbage chute, with dimensions $1 \times 1 \times L$. The dimensions of the mold can be extended slightly if necessary to ensure these extra garbage chutes don't intersect each other or any other previously constructed cavities.

\subsubsection{Piece Sequence} \label{sequence}
We will construct the piece sequence $S$ as follows: for $1 \leq i \leq n$, let $P_i$, the $i^{th}$ phase sequence, consist of all phase $i$ vertex pieces (in any order), with one garbage piece between every vertex piece. Now construct $S$ to contain the following elements in order:
\begin{enumerate}
    \item The mold
    \item One garbage piece
    \item All of the elements of each $P_i$, in ascending order of $i$
\end{enumerate}
\subsubsection{Play}

\begin{lemma}
    The first player is forced to play the mold with its play side facing upwards on the first turn.
\end{lemma}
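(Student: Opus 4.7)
The plan is to show that placing the mold in any orientation other than play-side-up forces player 1 to lose within two additional moves, so play-side-up is the only placement consistent with a winning strategy for player 1. The argument proceeds by a case analysis on the mold's orientation together with the observation that the next piece in the sequence is a garbage piece whose only placement in a bad orientation completely seals the top of the box.

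First I would enumerate the valid placements of the mold. Since the game box is chosen to tightly contain the mold's bounding box, and since (by the remark about slightly extending the mold if necessary) we may take its three dimensions $N$, $M$, $K$ pairwise distinct, the only rigid rotations that map the bounding box to itself are the identity and the three $180^\circ$ rotations about the coordinate axes, giving four placements in total. Exactly two have the play side facing upward (the identity and the rotation about the vertical axis); the other two have it facing downward.

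Second I would analyze the two play-side-down placements. By construction every non-play face of the mold contains a single $1 \times 1 \times L$ garbage chute and is otherwise solid, so in a play-side-down orientation the only cavity accessible from above is one such shallow chute on whichever non-play face is now on top. Because the mold fills its bounding box, there is also no empty space above the mold. On turn 2 player 2 must place the next piece in $S$, a garbage piece of dimensions $1 \times 1 \times L$; the only placement reachable by continuous downward motion is vertically into the shallow chute, which it then fills exactly, leaving the top of the box flush and solid. Now turn 3 calls for the first vertex piece of phase 1. But every vertex piece contains a binding of dimensions $m \times 1 \times 1$ and at least one edge column of dimensions $1 \times 1 \times n$, so no rotation of it has a footprint that fits through a $1 \times 1$ opening, and there is no vertical clearance above the filled top on which to rest a piece. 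Hence no vertex piece can be placed, player 1 cannot move on turn 3, and player 1 loses.

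The main obstacle is the orientation bookkeeping in the first step, in particular confirming that in every play-side-not-up orientation the top face exposes only a single $1 \times 1 \times L$ chute and that the mold leaves no accessible gap elsewhere in the box; this depends on the convention that the game box equals the mold's bounding box and on the placement of the shallow chutes on every non-play face. Once that geometric fact is in place, the forced garbage placement and the non-fit of any vertex piece are immediate from dimensions. Since the play-side-up placement is clearly valid (the mold can simply fall straight down without rotation), the lemma follows.
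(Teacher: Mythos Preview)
Your proof takes essentially the same approach as the paper: if the mold is placed with a non-play side facing up, the second piece (garbage) fills the exposed shallow chute exactly, leaving no opening for the subsequent vertex piece, so player~1 loses. That core idea is correct and matches the paper's argument.

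There is, however, a small gap in your orientation enumeration. You claim only four placements are possible because the three box dimensions are pairwise distinct, citing ``the remark about slightly extending the mold if necessary.'' That remark in the paper is about ensuring the extra garbage chutes do not intersect other cavities, not about forcing distinct dimensions; the paper never asserts $N$, $M$, $K$ are pairwise distinct. This is precisely why the construction places a shallow $1\times 1\times L$ chute on \emph{every} non-play face (all five of them), so that the argument works uniformly no matter which non-play face ends up on top. Your reduction to just two ``play-side-down'' cases is therefore not justified as written, though it is easily repaired either by adding the distinct-dimensions assumption explicitly to the construction or, more simply, by dropping the enumeration and arguing as the paper does: whichever non-play face is up carries exactly one $1\times 1\times L$ chute, and the rest of your argument (garbage fills it; the next vertex piece cannot fit) then goes through verbatim.
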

\begin{proof}
    Since every non-play side of the mold has garbage chutes exactly large enough to accommodate one garbage piece, and the second piece in $S$ is garbage, if the first player plays the mold with a non-play side facing up the second player can fill the chute with garbage after which there is no available space exposed to the top of the box, so the first player then loses.
\end{proof}

From now on, every claim we make will assume a game state in which the mold has been placed with its play side facing upwards.

\begin{lemma}
    If a player is presented with garbage, they must place it in the garbage chute.
\end{lemma}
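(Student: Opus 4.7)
The plan is to argue by a pure dimensional incompatibility. Recall that a garbage piece is a $1 \times 1 \times L$ column, where $L$ was chosen as one more than the maximum dimension of any cavity other than the main garbage chute. Regardless of how it is rotated, a garbage piece occupies a $1 \times 1$ cross section with extent $L$ along the remaining axis, so its $L$-axis simply will not fit inside any cavity other than the chute.

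First I would note that, because the mold has been placed play-side-up and fills the entire box except inside the cavities on its play side, any legal placement of a subsequent piece must have the piece lying entirely inside one of these cavities: there is no other free space inside the box, and the problem requires the piece to rest inside the box without intersecting previously placed pieces. Since the cavities on the play side are pairwise disjoint by construction, a garbage piece can be placed only if it fits entirely inside a single such cavity.

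Next I would rule out every cavity except the main garbage chute. The region of contest (depth $n$ with small $xy$ cross section) and each phase $i$ dump (depth $m(n-1)$ with small $xy$ cross section) have every dimension at most $L-1$ by the very choice of $L$. Hence the $L$-long axis of a garbage piece cannot be fit into any of them in any orientation. On the other hand, the main garbage chute has $1 \times 1$ cross section and depth $Ln(n-1) \geq L$, so a garbage piece can be dropped into it vertically and land either at the bottom or on top of previously placed garbage, since each placed garbage leaves the remaining free region of the chute still a $1 \times 1$ column of height at least $L$ up until the chute is completely full.

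The only subtlety to address is ensuring that no placement allows a garbage piece to straddle two distinct cavities or rest partly above the top of the box. Both are immediate from the mold's construction: the mold's play side is solid outside the disjoint cavity openings, so there is no contiguous $L$-long free region bridging two cavities, and any portion of a piece protruding above the top face of the box would violate the requirement that the piece be placed \emph{into} the box. Hence if any legal placement of the garbage exists at all, it must lie entirely within the main garbage chute, as claimed.
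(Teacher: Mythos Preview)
Your proof is correct and follows the same approach as the paper: both argue that $L$ exceeds the longest dimension of every non-garbage-chute cavity, so a $1\times 1\times L$ piece cannot fit anywhere else. Your version simply spells out more of the surrounding details (disjointness of cavities, no straddling, no protrusion above the box) that the paper leaves implicit; one very minor imprecision is that the mold also has the shallow $1\times 1\times L$ chutes on its non-play faces, but since the mold sits play-side-up these open against the box walls and are unreachable by a falling piece, so your conclusion is unaffected.
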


\begin{proof}
    Since $L$ was taken to be longer than the longest dimension of all non-garbage-chute cavities, and each garbage piece has length $L$, garbage pieces can only fit into the garbage chute. 
\end{proof}

This effectively allows us to skip the turns of the player who is out of phase during a given phase, since if we interweave vertex pieces on the in-phase player's turns with garbage pieces on the out-of-phase player's turns, the out-of-phase player has no choice but to place their pieces into the garbage, and the contents of the garbage chute have no effect on the rest of the board.

\begin{lemma}
    Vertex pieces have a unique placement into the region of contest.
\end{lemma}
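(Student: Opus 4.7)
The plan is to argue uniqueness by a chain of constraints that progressively pin down orientation, $x$-shift, $y$-shift, and reflection/rotation of a phase $i$ vertex piece inside the region of contest. First I would use the edge columns to force the orientation. Each edge column has dimensions $1\times 1\times n$, and by construction the region of contest is exactly $n$ units deep in the $z$-direction with a narrow cross-section in $x$ and $y$. Since every edge column of the piece points in the same direction and must fit both inside the box and inside the cavity, the only orientation that avoids driving one of these length-$n$ columns through the mold is the one in which the columns align with the $z$-axis. A $90^{\circ}$ rotation swapping the column axis with $x$ or $y$ would force an $n$-unit-long column to extend across a cavity dimension far smaller than $n$.

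Once the orientation is fixed, I would pin down the translation. The $xy$ cross-section of the region of contest contains $m$ open edge slots at $x$-positions $[h-1,h]$ for $h=1,\dots,m$, together with a binding shelf at $y\in[-1,0]$ and a handle channel extending in the $-y$ direction. The piece's binding spans $x\in[0,m]$ exactly and its edge columns sit at the $x$-positions corresponding to the edges incident to $v$, so the unique integer $x$-shift is the one that places the binding flush with the cavity walls and every edge column into a legitimate slot. Similarly, the handle extending into the handle channel seats the binding against the edge slots, fixing the $y$-shift as the unique value for which the handle's endpoint bottoms out without the binding colliding with the slot walls.

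Reflections and the remaining $180^{\circ}$ rotation about the vertical axis are then ruled out by the stub. In the intended orientation, the stub hangs at $x\in[\lceil m/2\rceil-1,\lceil m/2\rceil]$, one unit to one side of the handle; any such symmetry moves it to the opposite side of the handle channel, where the mold is solid by construction, producing a collision. I expect the main obstacle to be bookkeeping the interaction with phase $j<i$ vertex pieces that may already occupy parts of the cavity. The key observation is that the binding, handle, and stub of the phase $i$ piece all lie in $z\in[i-1,i]$, disjoint in $z$ from the binding/handle/stub of every earlier phase, so the only possible conflict is among edge columns. A conflict of edge columns corresponds exactly to two vertices sharing an incident edge both being played, which is the Node Kayles adjacency rule the reduction is built on; hence whenever the move is consistent with the current Kayles state, the unique placement identified above is indeed available, and the lemma follows.
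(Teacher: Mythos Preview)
Your orientation argument in the first paragraph has a genuine gap. You try to pin the edge columns to the $z$-axis by saying a $90^{\circ}$ rotation would ``force an $n$-unit-long column to extend across a cavity dimension far smaller than $n$.'' But the edge-column portion of the region of contest has width $m$ in the $x$-direction, and the paper explicitly arranges $n<m$ (by padding with dummy edges if necessary). So an $n$-long column \emph{does} fit along $x$; your edge-column argument does not rule out the sideways orientation. The paper's proof uses the opposite feature: it is the \emph{binding}, of length $m$, that cannot be rotated into the $z$-direction once $n<m$ is in force, and that is what eliminates the sideways placement. You never invoke the binding's length for this purpose, and you never state or justify the $n<m$ assumption, so the first step does not go through as written.

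Two smaller points. First, the paper pins the piece by observing that the handle must sit in the narrow $1\times(n+2)$ strip of the cavity; this, together with the binding argument above and the stub/indentation asymmetry, is already enough, so your separate bookkeeping of $x$- and $y$-shifts via ``edge slots'' is more detailed than the construction supports (the edge-column part of the cavity is a solid $m\times 2$ region, not a row of slots). Second, your final paragraph about interaction with earlier-phase pieces does not belong in this lemma at all; the lemma asserts only that the placement into the region of contest is unique, and the question of whether that placement is \emph{available} given previously played pieces is handled later.
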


\begin{proof}
It is clear that the handle of a vertex pieces is forced to lie in the $1 \times n + 2$ strip of the region's cavity, so it suffices to show that only one of the four placements of the edge column section of the vertex piece into the $m \times 2$ section of the cavity is possible. We can assume without loss of generality that $n < m$; if this is not the case, we can create ``dummy edges'' which are not incident on any vertex, thus padding the dimensions of any piece whose size depends on $m$ but without effecting game-play. This tells us that the edge column region cannot be placed sideways, since the binding has height $m$ so orienting it vertically would have it rise above the depth of the cavity (which is only $n$). The stub on the handle of each vertex piece and the corresponding indentation on one side of the region of contest prevents any vertex piece from being placed upside down.
\end{proof}

\begin{lemma}
    During phase $i$, the player who is in phase must place exactly one vertex piece into the region of contest.
\end{lemma}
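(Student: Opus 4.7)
The plan is to decompose the lemma into three claims: (a) during phase $i$, every vertex piece presented to the in-phase player must be placed either in the region of contest or in the phase $i$ dump, (b) the region of contest can hold at most one such piece, and (c) the phase $i$ dump can hold at most $n-1$ such pieces. Since the in-phase player is handed exactly $n$ phase $i$ vertex pieces during that phase, and losing immediately by failing to place a piece is worse than any valid placement, these three claims together force exactly one phase $i$ vertex piece into the region of contest.

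I would prove (b) by a direct footprint argument: every phase $i$ vertex piece contains the binding block $[0,m]\times[-1,0]\times[i-1,i]$, so if two of them were placed in the region of contest their bindings would coincide in space. For (c) I would lean on the dimensional parameters of the dump: its cross section (per Figure~\ref{fig:dump}) is shaped to admit exactly one vertex piece cross section at a time, and its $z$-depth of $m(n-1)$ is matched so that each vertex piece, once caught by gravity, consumes $m$ units of depth (the binding spans $x$-length $m$ and is oriented along the dump's long axis when the piece settles). After $n-1$ such pieces are stacked, the remaining depth is insufficient to hold another piece with support below the opening, so a further placement would fail.

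Claim (a) is where the real work lies. I would first rule out the garbage chute and the auxiliary face chutes by size: each has a $1\times 1$ opening, whereas every vertex piece contains a binding of $x$-length $m$ and edge columns of $z$-length $n$, so no vertex piece can squeeze through such a chute. I would then argue that the phase $j$ dump with $j\neq i$ cannot accept a phase $i$ vertex piece because the relative offset between the handle and the stub, together with the $z$-layer $[i-1,i]$ at which they live, is phase-specific; a phase $i$ piece dropped into the phase $j$ dump cannot simultaneously satisfy the opening's shape and the falling-with-support requirement unless $i=j$. I would formalize this by enumerating the finitely many orientations a vertex piece can occupy during its fall and checking that each one is blocked by a mismatch between the stub's $y$-offset (which is constant across phases) and the mouth of the wrong dump.

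The hard part will be claim (a), specifically the wrong-dump exclusion. Claims (b) and (c) are essentially volume and footprint accounting, but ruling out a phase $i$ vertex piece landing in, say, the phase $i+1$ dump requires careful use of the $i$-dependent handle and stub coordinates against the fixed cross sections of the dumps, as well as the constraint that pieces fall by continuous downward motion from the top. I expect this reduces to a bounded case analysis rather than a single slick invariant, and the paper's proof likely appeals directly to the geometry shown in Figures~\ref{fig:tetris vertex} and~\ref{fig:contest and dumps}.
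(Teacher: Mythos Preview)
Your decomposition into (a), (b), (c) matches the paper's structure, but two of your three claims are argued incorrectly.

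For (b), your footprint argument fails: two phase $i$ vertex pieces placed in the region of contest do \emph{not} have coinciding bindings, because pieces fall under gravity. The second piece lands with its binding resting \emph{on top of} the first piece's binding, not at the same coordinates. The paper's argument is that in this stacked configuration (forced by the unique-orientation lemma and by the pieces being of the same phase), the edge columns of the upper piece protrude one unit above those of the lower one, so the combined height is at least $n+1$ and exceeds the cavity depth $n$.

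For (a), your exclusion of wrong dumps has a genuine gap. You assert that a phase $i$ piece cannot be dropped into any phase $j$ dump with $j\neq i$ on purely geometric grounds (stub offset, handle layer). But look at the handle lengths: the phase $i$ handle has length $n-i+2$ while the phase $j$ dump's handle region has length $n-j+2$. Geometry only blocks $j>i$ (dump handle region too short). For $j<i$ the dump is \emph{larger}, and nothing in the piece's shape prevents it from settling there. The paper handles this by proving a stronger statement by induction on $i$: during each phase, one piece goes to the region of contest and $n-1$ go to the phase $i$ dump, so by the time phase $i$ begins, every dump with index $j<i$ is already full. Without this inductive bookkeeping, claim (a) is false as stated.
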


\begin{proof}
Say that one phase $i$ vertex piece has been placed in the region of contest, and that a second phase $i$ vertex piece is then placed on top of it. As seen in Figure ~\ref{fig:double placement}, the handle and binding of the second piece must rest on top of the handle and binding of the first piece. By the previous lemma we see both pieces must be placed facing up, and since they are of the same phase, both have the same length of edge columns rising above their bindings. So the tops of the edge columns of the second piece stick out one unit above the tops of the edge columns of the first, so together the two pieces have a vertical height of at least $n + 1$ and thus cannot fit into the region of contest.

\begin{figure}[H]
\centering
\includegraphics[scale=0.5]{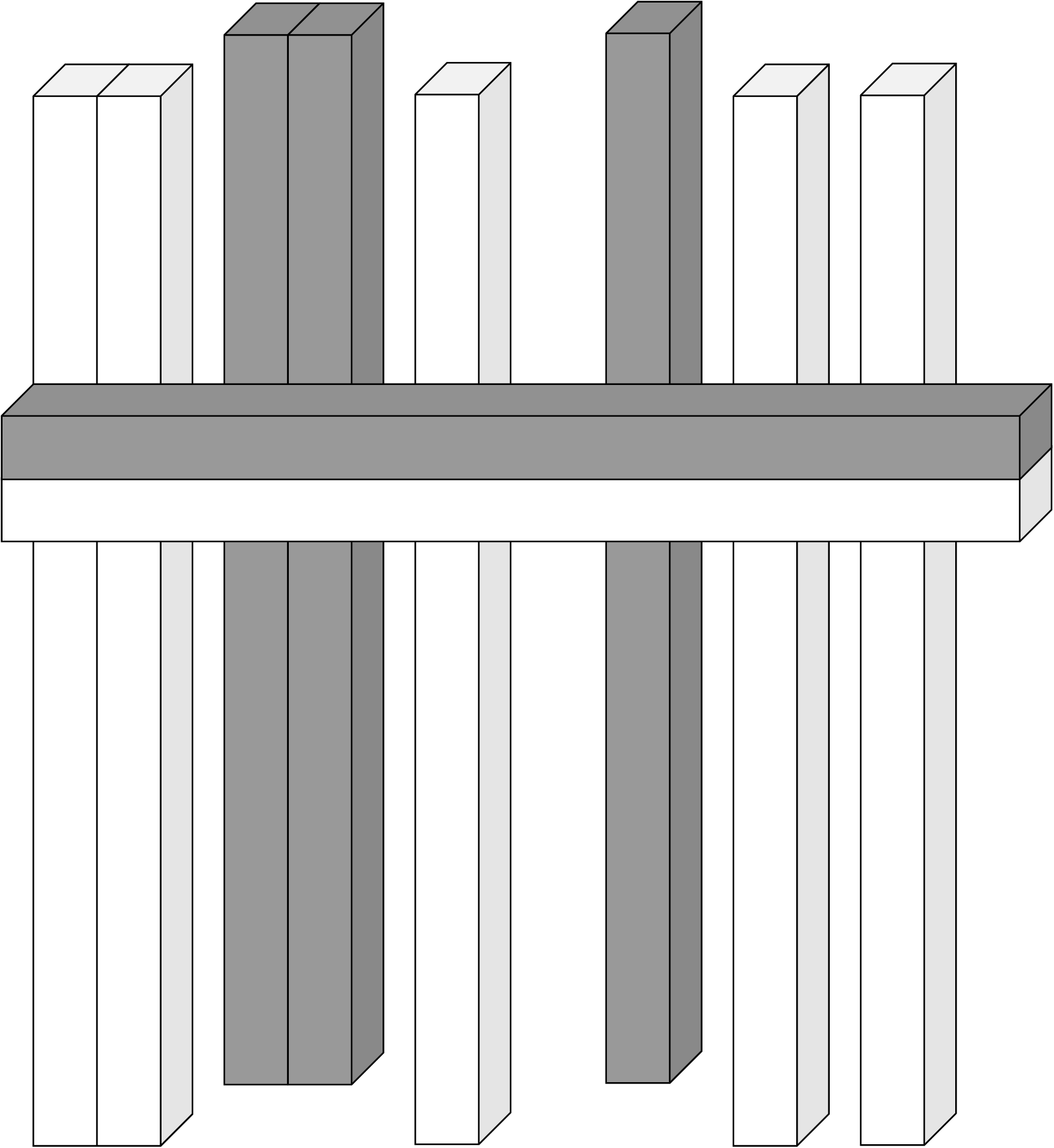}
\caption{Attempting to place two phase 8 vertex pieces into the region of contest. Handles have been removed for clarity. First white was placed, then grey, so grey now sticks out of the top of the cavity.}
\label{fig:double placement}
\end{figure}
Next, we show that for any phase $i$, exactly $n-1$ vertex pieces fit into the phase $i$ dump. Since we showed above that we can assume $n < m$, a vertex piece cannot be placed into a dump in the same orientation as it is placed into the region of contest. This is because the thickest end of the dump has length $n$, and if a vertex piece is placed in the same orientation used for placement in the region of contest, its thickest end will have length $m$ and thus will not fit into the dump. When the first vertex piece is placed sideways into a dump its binding will rise from the floor up to height $m$, and when further pieces are placed, the bindings stack consecutively on top of one another, each with height $m$. Since the dump has depth $m(n-1)$, this means that exactly $n-1$ vertex pieces will fit into the phase $i$ dump.

Now we can prove the main claim. We do so by proving the following stronger statement by induction on $i$: during phase $i$, the player who is in phase must place 1 vertex piece into the region of contest and $n-1$ vertex pieces into the phase $i$ dump. First, note that a phase $i$ vertex piece does not fit into any phase $k$ dump for $k > i$, since the handle of a phase $i$ vertex piece has length $n - i + 2$ while the handle region of the phase $k$ dump is $n - k + 2$ and is thus too short to accommodate the vertex piece. Thus, the base case for our inductive claim is easy: by the above arguments, only one phase 1 vertex piece can fit into the region of contest, and the only available cavity that can fit the rest of the pieces is the phase 1 dump, which fits only $n-1$ pieces, forcing us to put at least one into the region of contest, and the rest into the dump. Now for the inductive case, assume the hypothesis holds up to and including some number $i$; we will now prove it for $i + 1$. As proven above, at most one phase $i + 1$ vertex piece can be placed into the region of contest. Due to our inductive hypothesis, we know that for all $h < i + 1$, the phase $h$ dump contains $n - 1$ phase $h$ vertex pieces and is thus full, and for any $k > i + 1$, the phase $k$ dump is too small to fit a phase $i + 1$ vertex piece. So the only cavity into which the phase $i + 1$ vertex pieces can fit is the phase $i + 1$ dump, and as proven in the previous paragraph, this cavity only fits $n - 1$ pieces. So exactly one phase $i + 1$ vertex piece must be placed into the region of contest, and the other $n - 1$ must be placed into the phase $i + 1$ dump, thus completing the proof.
\end{proof}

\begin{lemma}
    The first player to win in this instance of 2-Player 3D $n$-tris has a winning strategy if and only if the first player in Node Kayles on $G$ does.
\end{lemma}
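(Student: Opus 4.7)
The plan is to use the preceding lemmas to reduce this instance of 2-Player 3D $n$-tris to Node Kayles on $G$ phase by phase. First I would observe that the opening moves are entirely forced by the earlier lemmas: Player~1 must place the mold play-side up, Player~2 must drop the opening garbage piece into the chute, and thereafter every garbage piece is likewise forced into the chute. Hence every ``garbage'' turn has no effect on the rest of the board, and each phase reduces to the in-phase player executing their $n$ vertex-piece moves. By the immediately preceding lemma the in-phase player must place exactly one phase-$i$ vertex piece in the region of contest and the other $n-1$ in the phase-$i$ dump. Since the piece order within a phase is known in advance and dumping remains available until the dump is full, the in-phase player has complete freedom to decide which single vertex $v$ has its piece routed into the region of contest, provided such a piece will fit there.

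Next I would establish the geometric correspondence between placements in the region of contest and legal Kayles moves. The edge columns of $p_v$ occupy exactly the $x$-slabs $[h-1,h]$ for the edges $e_h$ incident on $v$, while the bindings of pieces from different phases sit in disjoint $z$-slabs, so two pieces $p_u$ and $p_v$ can coexist in the region of contest if and only if they share no edge column, i.e. if and only if $u \neq v$ and $uv \notin E(G)$. Consequently the phase-$i$ piece for $v$ fits into the current region of contest exactly when $v$ has not yet been chosen and no previously chosen vertex is a neighbor of $v$. The sequence of vertices realized over phases $1, \ldots, n$ is therefore a legal Node Kayles play on $G$ with Player~1 marking on odd phases and Player~2 marking on even phases.

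Finally I would align the terminal conditions. If at the start of phase $i$ some vertex $v$ of $G$ is unmarked and non-adjacent to every previously marked vertex, the in-phase player can dump the phase-$i$ pieces preceding $p_v$, place $p_v$ in the region of contest when it arrives, and dump the remainder; the phase completes without incident. If no such $v$ exists --- precisely the Kayles losing position for the mover --- then every phase-$i$ vertex piece is blocked from the region of contest, so all $n$ must squeeze into the dump; but only $n-1$ fit, so at some turn within phase $i$ the in-phase player is handed a piece with nowhere to put it and loses. Induction on the phase index then yields the desired biconditional: Player~1 wins this instance of 2-Player 3D $n$-tris if and only if Player~1 wins Node Kayles on $G$.

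The step I expect to require the most care is verifying that the in-phase player really can target any feasible vertex of their choice, despite the fixed piece order and irrevocable placements. This reduces to checking that dumping a vertex piece produces no side effect on subsequent placements, which follows from the disjoint-$z$-slab stacking of bindings inside the dump established in the previous lemma together with the fact that the dump's capacity is exactly $n-1$. Beyond this geometric check, the rest of the argument is bookkeeping on top of the earlier lemmas.
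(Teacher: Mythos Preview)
Your proposal is correct and follows essentially the same approach as the paper: you invoke the preceding lemmas to strip away the forced moves, then argue that coexistence in the region of contest is governed precisely by disjointness of edge columns and hence by non-adjacency in $G$, so phase-$i$ play simulates turn-$i$ Kayles. Your write-up is in fact slightly more explicit than the paper's in two respects---you spell out the losing condition (all $n$ pieces blocked from the region of contest forces an overflow of the $(n-1)$-capacity dump) and you flag the need to check that the in-phase player can target any feasible vertex despite the fixed piece order---but these are elaborations of the same argument rather than a different route.
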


\begin{proof}
By Lemma 3.5 we know that the strategy of the player who is in phase during phase $i$ is fully defined by which vertex piece they choose to place into the region of contest. By Lemma 3.4 we see that pieces must be placed into the region of contest in their correct upright orientation. In this orientation it is clear that a set of pieces can be placed together in the region of contest in increasing order of phase if and only if they occupy disjoint sets of edge columns, since if two vertices occupy the same edge column they will intersect along that column, and if they do not occupy any common edge columns they will not otherwise intersect by construction. In other words, a vertex piece can be placed into the region of contest if and only if its corresponding vertex does not share an edge with the corresponding vertex of any vertex piece that is already in the region of contest. So choosing which vertex piece to place into the region of contest on phase $i$ corresponds to choosing which vertex in $G$ to mark on turn $i$ in Node Kayles, and a vertex piece corresponding to vertex $v$ can be chosen in a given phase if and only if none of the vertex pieces corresponding to $v$'s neighbors have already been chosen. So we see that this game mimics Node Kayles in its moves and its win conditions, so the first player has a winning strategy in this game if and only if they did in Node Kayles on $G$. 
\end{proof}

\subsection{Hardness With Plane-Clearing}
To extend hardness to the case with plane clearing, we simply modify the non-play sides of the mold to have small indentations to prevent any plane from every being full. Note that in our construction of the mold, each $z$-valued, $xy$-parallel plane has its lower left and right corners filled (see the lower boundary in Figure~\ref{fig:tetris frame}). Now, for every even value of $z$, modify that layer of the mold by removing the cube in the lower left corner, and for every odd value of $z$, modify that layer by removing the cube in the lower right corner. This modification does not effect play since the only new cavities are 1-unit cubes into which no piece fits, and after this modification is made, no $z$-valued plane can ever get filled and clear since one of its lower corners will always be empty. So our reduction holds regardless of the presence of plane-clearing.

\section{Open Problems}\label{open}
\paragraph{2-Player Polycube Packing with Constant Sized Pieces:} An interesting question raised by our Polycube Packing reduction is whether we can establish hardness of polycube packing using pieces of constant complexity, since our reduction relied on the use of highly complex pieces. It seems plausible that this problem is also PSPACE-hard, provided  the bounding box is allowed to be an arbitrary polycube. The analogous 1-player 2D puzzle of optimally packing constant sized polyominoes into a larger polyomino has been proven NP-complete \cite{squarepacking}.
\paragraph{2-Player Polyomino Packing:} Another natural extension of our polycube packing result would be to prove PSPACE-hardness of 2-Player polyomino packing into a rectangle. Key to our reduction was the fact (implicitly proven by our reduction) that every graph can be represented as the intersection graph of a set of polycubes in $\mathbb{R}^3$. The same is not true for polyominoes lying in the plane. Indeed for any arrangement of polyominoes in $\mathbb{R}^2$, we can replace each with a Jordan arc tracing out its interior without effecting pairwise intersections. This implies that polyomino intersection graphs are a subset of the family of ``string graphs'', the graphs representable as the intersection graph of a set of curves in the plane. String graphs properly contain the planar graphs, but beyond that they are not well understood, and their recognition is NP-complete \cite{ss}. In order to extend this proof technique to the planar case, hardness of Node Kayles would thus need to be known for some subset of string graphs, for example planar graphs. Currently, Kayles is not known to be hard on any restricted class of graphs, and in fact admits a polynomial time algorithm for a subset of the string graphs known as cocomparability graphs \cite{nim}.
\paragraph{2-Player Tetris:} Another problem that remains open is settling the complexity of restricted versions of 2-Player 3D $n$-tris, such as 2-Player 2D $n$-tris, 3D $O(1)$-tris (where the pieces are polycubes of constant size), or ideally, 2-Player 2D Tetris.

\section{Acknowledgements}
This paper began as a final project for 6.892, an MIT course on Algorithmic Lower Bounds taught by Erik Demaine. The author would like to thank Erik Demaine along with Jeffrey Bosboom, Adam Hesterberg, and Jayson Lynch for their useful criticisms and commentary on earlier drafts of this paper, and for introducing the author to the study of games and puzzles.
\bibliographystyle{plain}
\include{thebib}
\renewcommand{\thepage}{}
\setlength{\bibsep}{10pt}
\begingroup
    \pagestyle{empty}
    \setlength{\bibsep}{10pt}
    \bibliography{thebib.bib}
\endgroup

\end{document}